\documentclass[conference]{IEEEtran}

\usepackage{amsmath, amsthm, amssymb}
\usepackage{latexsym}
\usepackage{graphicx}
\usepackage{verbatim}
\usepackage{mathrsfs}
\usepackage{float}
\usepackage[lofdepth, lotdepth]{subfig}
\usepackage{array}
\newcolumntype{L}[1]{>{\raggedright\let\newline\\\arraybackslash\hspace{0pt}}m{#1}}
\newcolumntype{C}[1]{>{\centering\let\newline\\\arraybackslash\hspace{0pt}}m{#1}}
\newcolumntype{R}[1]{>{\raggedleft\let\newline\\\arraybackslash\hspace{0pt}}m{#1}}
\usepackage{url}
\usepackage{algorithm}
\usepackage[noend]{algorithmic}
\usepackage{cite}
\usepackage[table]{xcolor}
\usepackage{enumerate}
\usepackage{eucal}
\usepackage{stmaryrd}
\usepackage{mathtools}
\usepackage{pgf}

\usepackage{tikz}
\usetikzlibrary{arrows, arrows.meta,calc,positioning,shapes.arrows}
\usetikzlibrary{decorations.pathreplacing}
\usetikzlibrary{chains,fit,shapes}
\usetikzlibrary{mindmap,trees}


\usepackage[margin=0.50in]{geometry}
\usepackage{mathtools}
\DeclarePairedDelimiter\ceil{\lceil}{\rceil}
\DeclarePairedDelimiter\floor{\lfloor}{\rfloor}



\theoremstyle{plain}
\newtheorem{theorem}{Theorem}
\newtheorem{proposition}[theorem]{Proposition}
\newtheorem{lemma}[theorem]{Lemma}

\newtheorem{corollary}[theorem]{Corollary}

\theoremstyle{definition}
\newtheorem{definition}[theorem]{Definition}
\newtheorem{example}[theorem]{Example}
\newtheorem{remark}[theorem]{Remark}


\newcommand{\C}{{\mathcal C}}
\newcommand{\D}{{\mathcal D}}

\newcommand{\G}{{\mathcal G}}

\newcommand{\Q}{{\mathcal Q}}
\newcommand{\X}{{\mathcal X}}

\newcommand{\Y}{{\mathcal Y}}

\newcommand{\cS}{{\mathcal S}}
\newcommand{\T}{{\mathcal T}}



\DeclareMathAlphabet{\mathbfsl}{OT1}{ppl}{b}{it} 

\newcommand{\ba}{{\mathbfsl a}}

\newcommand{\bara}{\overline{\mathbfsl a}}

\newcommand{\bu}{{\mathbfsl u}}
\newcommand{\bv}{{\mathbfsl v}}

\newcommand{\by}{{\mathbfsl y}}

\newcommand{\bd}{{\mathbfsl{d}}}

\newcommand{\bw}{{\mathbfsl{w}}}

\newcommand{\br}{{\mathbfsl{r}}}
\newcommand{\bx}{{\mathbfsl{x}}}
\newcommand{\bz}{{\mathbfsl{z}}}

\newcommand{\be}{{\mathbfsl e}}



\newcommand{\bbZ}{{\mathbb Z}}










\renewcommand{\ge}{\geqslant}
\renewcommand{\le}{\leqslant}

\newcommand{\syn}{{\rm Syn}}

\newcommand{\CSVT}{{\cal C}_{\rm CSVT}}

\newcommand{\etal}{{\em et al.}}


\begin{document}

\pagestyle{empty}

\title{Optimal Reconstruction Codes for Deletion Channels\\[-5mm]}
\author{\IEEEauthorblockN{
Johan Chrisnata\IEEEauthorrefmark{1,2},
Han Mao Kiah\IEEEauthorrefmark{2}, 
and Eitan Yaakobi\IEEEauthorrefmark{1}}\\[-5mm]
\IEEEauthorblockA{
\IEEEauthorrefmark{1}%
Department of Computer Science, Technion --- Israel Institute of Technology, Haifa, 3200003 Israel\\
\IEEEauthorrefmark{2}%
School of Physical and Mathematical Sciences, Nanyang Technological University, Singapore 637371\\
Emails: johanchr001@ntu.edu.sg, hmkiah@ntu.edu.sg, yaakobi@cs.technion.ac.il\\[-5mm]
}
}

\maketitle


\hspace{-3.5mm}\begin{abstract}
The sequence reconstruction problem, introduced by Levenshtein in 2001, considers a communication scenario where the sender transmits a codeword from some codebook and the receiver obtains multiple noisy reads of the codeword.  
Motivated by modern storage devices, we introduced a variant of the
problem where the number of noisy reads $N$ is fixed (Kiah \etal{ }2020). 
Of significance, for the single-deletion channel, 
using $\log_2\log_2 n +O(1)$ redundant bits, we designed a reconstruction code of length $n$ that reconstructs codewords from two distinct noisy reads.

In this work, we show that $\log_2\log_2 n -O(1)$ redundant bits are necessary for such reconstruction codes, 
thereby, demonstrating the optimality of our previous construction.
Furthermore, we show that these reconstruction codes can be used in $t$-deletion channels (with $t\ge 2$) to uniquely reconstruct codewords from $n^{t-1}+O\left(n^{t-2}\right)$ distinct noisy reads.
\end{abstract}

\vspace{-3mm}


\section{Introduction}

As our data needs surge, new technologies emerge to store these huge datasets. 
Interestingly, besides promising ultra-high storage density, certain emerging storage media, 
such as DNA based storage \cite{Church.etal:2012,Goldman.etal:2013,Yazdi.etal:2015b, Organick2017} and racetrack memories \cite{Parkin.2008, Zhang.2016, chee2018coding}, rely on technologies that provide users with multiple cheap, albeit noisy, reads.
In our companion paper \cite{Kiah.arxiv.2020}, we proposed a {\em coding solution} to leverage on these multiple reads to increase the information capacity,
or equivalently, reduce the number of redundant bits.

Our code design problem is based on the {\em sequence reconstruction problem}, formulated by Levenshtein \cite{Levenshtein.2001}. 
In Levenshtein's seminal work, he considers a communication scenario where the sender {transmits} a codeword from some codebook  and the receiver obtains multiple noisy reads of the codeword.  
The common setup assumes the codebook to be the entire space and the problem is to determine the minimum number of distinct reads $N$ that is required to reconstruct the transmitted codeword.
In constrast, in our problem, the parameter $N$ is fixed and our task is to design a {\em codebook} such that every codeword can be uniquely reconstructed from any $N$ distinct noisy reads.

Hence, our fundamental problem is then: how large can this codebook be? 
Or equivalently, what is the {\em minimum redundancy}?
Modifying a code construction in \cite{chee2018coding}, we provided in~\cite{Kiah.arxiv.2020} a number of reconstruction codes for the single-edit channel and its variants with $\log_2\log_2 n+O(1)$ bits of redundancy.
In this paper, we focus on the converse of the problem and demonstrate that $\log_2\log_2 n - O(1)$ redundant bits are {\em necessary}.
To ease our exposition, we focus on channels with {\em deletions only} and our first contribution is to demonstrate this lower bound on redundancy for the case $N=2$ and single deletions.

In our proof, we characterize the conditions when two single-deletion balls have intersection size two
(i.e. when two different codewords result in two noisy reads through the single-deletion channel).
In this same spirit, we determine when two single-deletion balls have intersection size one.
Using this characterization, we show that the same reconstruction code for the single-deletion channel 
can be used to uniquely reconstruct codewords with 
approximately half the number of reads (as compared to the case with no coding) for the $t$-deletions channel with $t\ge 2$. 
We formally describe our problem and results in the next section.


\section{Preliminaries}

Consider a data storage scenario described by an error-ball function.
Formally, given an input space $\X$ and an output space $\Y$, 
an {\em error-ball} function $B$ maps a {\em word} $\bx\in \X$ 
to a subset of {\em noisy reads} $B(\bx)\subseteq \Y$.
Given a code $\C\subseteq \X$, we define the {\em read coverage} of $\C$, 
denoted by $\nu(\C;B)$, to be the quantity
\begin{equation*}\label{eq:read}
\nu(\C;B)\triangleq \max \Big\{|B(\bx)\cap B(\by)| : \, \bx,\by\in \C, \, \bx\ne\by \Big\}\, .
\end{equation*}
\noindent 
In other words, $\nu(\C;B)$ is the maximum intersection between the error-balls of any two codewords in $\C$. 
The quantity $\nu(\C;B)$ was introduced by Levenshtein \cite{Levenshtein.2001}, 
where he showed that the number of reads%
\footnote{In the original paper, Levenshtein used the term ``channels'', instead of reads. 
Here, we used the term ``reads'' to reflect the data storage scenario.}
required to reconstruct a codeword from $\C$ is at least $\nu(\C;B)+1$.
The problem to determine $\nu(\C;B)$ is referred to as the {\em sequence reconstruction problem}.

The sequence reconstruction problem was studied in a variety of storage and communication scenarios \cite{chee2018coding, Cheraghchi.2019, Gabrys.2018, Konstantinova:2008, Levenshtein.2009, Sala.2017, yehezkeally2018reconstruction, Sini.2019, Junnila.2019}.
In these cases, $\C$ is usually assumed to be the entire space (all binary words of some fixed length) or a classical error-correcting code. 
However, in most storage scenarios, the number of noisy reads $N$ is a fixed system parameter and 
when $N$ is at most $\nu(\C;B)$, we are unable to {uniquely} reconstruct the codeword. 
In \cite{Kiah.arxiv.2020}, we propose the study of {\em code design} when the read coverage is strictly less than $\nu(\C;B)$.
Specifically, we say that $\C$ is an $(n,N;B)$-{\em reconstruction code} if 
$\C\subseteq \{0,1\}^n$ and $\nu(\C;B)<N$.

This gives rise to a {\em new quantity of interest} that measures the 
{\em trade-off between codebook redundancy and read coverage}. 
Specifically, given $N$ and an error-ball $B$, we study the quantity
\begin{equation*}\label{eq:code.red}
\rho(n,N;B)\triangleq \min \Big\{n - \log|\C| : \C\subseteq \{0,1\}^n,\, \nu(\C;B) < N \Big\}.
\end{equation*}

\subsection{The Sequence Reconstruction Problem for Deletion Channels}

{In this work, we focus on channels that {introduce} {\em deletions only}.
Specifically, let $\D_t(\bx)$ denote the deletion ball of $\bx$ with exactly $t$ deletions.
Let $D_t(n)$ denote the maximum deletion ball size of words of length $n$, that is, $D_t(n)=\max \{|\D_t(\bx)|: \bx\in\{0,1\}^n\}$. 
It is well known (see for example, \cite{Levenshtein.2001.jcta}) that 
\begin{equation}\label{eq:dtn1}
	D_t(n) = \sum_{i=0}^{t}\binom{n-t}{i}=n^{t}+O(n^{t-1}), \mbox{ for $0\le t\le n$}.
\end{equation}
For convenience, we assign $D_t(n)=0$ when $t<0$ or $t>n$.

For purposes of brevity, we let $\nu_t(n)$ denote $\nu(\{0,1\}^n;\D_t)$, the read coverage of $\{0,1\}^n$. 
We have the following landmark result of Levenshtein.
\begin{theorem}[Levenshtein \cite{Levenshtein.2001.jcta}]
	\begin{equation}\label{eq:read-wholespace}
	\nu_t(n)=2D_{t-1}(n-2)  = 2n^{t-1}+O(n^{t-2})\,.
	\end{equation}
\end{theorem}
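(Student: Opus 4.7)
The plan is to prove the exact equality $\nu_t(n) = 2 D_{t-1}(n-2)$ by establishing matching upper and lower bounds; the asymptotic expression $2n^{t-1} + O(n^{t-2})$ then follows directly from the polynomial expansion of $D_{t-1}(n-2)$ given in \eqref{eq:dtn1}.

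For the lower bound, I would exhibit an explicit pair of distinct length-$n$ words $(\bx^\star, \by^\star)$ whose $t$-deletion balls intersect in exactly $2 D_{t-1}(n-2)$ words. The natural candidates are boundary-disagreeing pairs with $\bx^\star_1 \ne \by^\star_1$ and $\bx^\star_n \ne \by^\star_n$, constructed so that each length-$(n-t-1)$ subsequence of a shared central structure of length $n-2$ extends to a common length-$(n-t)$ subsequence of $\bx^\star, \by^\star$ in exactly two ways: once by prepending $\bx^\star_1$, and once by prepending $\by^\star_1$ (with the trailing symbol determined by the boundary constraints). Enumerating these extensions directly yields $2 D_{t-1}(n-2)$ distinct common subsequences; since the extremal pair depends on the parameters $(n,t)$, I would verify the count separately for each of the standard families (e.g.\ alternating strings for large $n$, or runs-based constructions like $\bx^\star = 0\,1^{n-2}\,0$, $\by^\star = 1\,0^{n-2}\,1$).

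For the upper bound, one must show that for every pair of distinct words $\bx, \by \in \{0,1\}^n$, $|\D_t(\bx) \cap \D_t(\by)| \le 2 D_{t-1}(n-2)$. The plan is induction on $n$ via a boundary reduction. If $\bx_1 = \by_1$ (resp.\ $\bx_n = \by_n$), strip the common first (resp.\ last) symbol: common length-$(n-t)$ subsequences either use this shared symbol (reducing to a common length-$(n-t-1)$ subsequence of the shorter suffixes) or skip it (which costs one deletion from each of $\bx, \by$, reducing to the $(n-1, t-1)$ instance). Iterating, one reaches the residual case $\bx_1 \ne \by_1$ and $\bx_n \ne \by_n$. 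In this case, any common length-$(n-t)$ subsequence $z$ is classified by whether $z_1 = \bx_1$ or $z_1 = \by_1$: in the first alternative, the alignment of $z$ in $\by$ must delete position $1$ of $\by$, so $z$ is a common subsequence of the length-$(n-1)$ tails of $\bx$ and $\by$ with only $t-1$ deletions remaining on $\by$'s side; a symmetric statement holds at the last coordinate. The combined boundary accounting pins the central subsequence into a length-$(n-2)$ window with $t-1$ deletions, yielding at most $D_{t-1}(n-2)$ choices per class and $2 D_{t-1}(n-2)$ overall.

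The main obstacle is the bookkeeping in the upper-bound argument: a common subsequence $z$ generally admits multiple alignments to $\bx$ and to $\by$, so the classification-by-leading-symbol argument requires a canonical choice of alignment (typically leftmost on both sides) to ensure the partition is genuine and no subsequence is double-counted. Ensuring that the boundary reduction closes the induction with the correct constants -- matching exactly the factor~$2$ and the length $n-2$ in the bound rather than a weaker inequality -- is the technical heart of the proof, and is where Levenshtein's original combinatorial accounting is most delicate.
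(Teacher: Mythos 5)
The paper offers no proof of this statement: it is imported verbatim from Levenshtein \cite{Levenshtein.2001.jcta} as a known result, so there is no in-paper argument to compare yours against. Judged on its own merits, your sketch has the right overall shape (an extremal pair for the lower bound, a classification of common subsequences by boundary symbols for the upper bound), but two of its concrete steps do not hold as written.

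For the lower bound, the runs-based candidate $\bx^\star=0\,1^{n-2}\,0$, $\by^\star=1\,0^{n-2}\,1$ cannot be extremal: the two words have Hamming weights $n-2$ and $2$, so for fixed $t$ and large $n$ their $t$-deletion balls contain words of disjoint weight ranges and the intersection is empty. Only the alternating pair works, and there the count $2D_{t-1}(n-2)$ comes from the fact that an alternating word of length $n-2$ has the maximal ball size \eqref{eq:dtn1}. For the upper bound, the ``combined boundary accounting'' is where the argument breaks: classifying a common subsequence $\bz$ by \emph{both} its first and last symbols produces four classes, and the natural bounds give $2D_{t-1}(n-2)+2D_{t-2}(n-2)$, which overshoots the target. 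The correct accounting classifies by the first symbol only: when $\bx_1\ne\by_1$ and $\bz_1=\bx_1$, every alignment of $\bz$ into $\by$ must delete position $1$ of $\by$, so $\bz\in\D_{t-1}(\by_2\cdots\by_n)^{\bx_1}$, and the second length reduction (from $n-1$ down to $n-2$) comes not from the last coordinate but from the restricted-ball bound $|\D_{t-1}(\bw)^{a}|\le D_{t-1}(|\bw|-1)$, i.e.\ Lemma~\ref{lemma:dtn}(iv) of this paper. That yields $D_{t-1}(n-2)$ per class and $2D_{t-1}(n-2)$ in total; the remaining work, which your sketch correctly flags but does not resolve, is the reduction from arbitrary pairs to pairs with $\bx_1\ne\by_1$ via the recursion $\nu_t(n)=\nu_t(n-1)+\nu_{t-1}(n-2)$ of Lemma~\ref{lemma:dtn}(i).
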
}

Recently, the authors of \cite{Gabrys.2018} studied the sequence reconstruction problem 
when $\C$ is a single-deletion-correcting code or an $(n,1;\D_1)$-reconstruction code.
Namely, they showed that $\C$ allows unique reconstruction with significantly less reads (as compared to $\nu_t(n)$) for deletions with $t\ge 2$.

\begin{theorem}[\hspace{-0.5pt}\cite{Gabrys.2018}]\label{thm:1-reconstruction}
	Let $\bx$ and $\by$ be two words of length $n\ge 7$. 
	If $\D_1(\bx)\cap \D_1(\by)=\varnothing$, then $|\D_t(\bx)\cap \D_t(\by)|\le N^{(1)}_t(n)$ for $t\ge 2$, where%
	{\small
		\begin{align}
		N^{(1)}_t(n) & = 2D_{t-2}(n-4) + 2D_{t-2}(n-5) + 2D_{t-2}(n-7) \notag\\
		&\hspace{2mm} + D_{t-3}(n-6) + D_{t-3}(n-7)
		=2n^{t-2}+O(n^{t-3}). \label{eq:N1}
		\end{align}
		
	}
	Therefore, if $\C$ is an $(n,1;\D_1)$-reconstruction code,
	then $\nu(\C;\D_t)\le N^{(1)}_t(n)$ and so, $\C$ is also an $\left(n,N^{(1)}_t(n)+1;\D_t\right)$-reconstruction code for $t\ge 2$ and $n\ge 7$.
	Furthermore, this implies that $\rho\left(n,N^{(1)}_t(n)+1;\D_t\right)\le\log_2 n+O(1)$.
\end{theorem}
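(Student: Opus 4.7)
My plan is to decompose each common deletion $\bz \in \D_t(\bx) \cap \D_t(\by)$ as a single deletion followed by $t-1$ further deletions. Writing $\D_t(\bx) = \bigcup_{\bu \in \D_1(\bx)} \D_{t-1}(\bu)$ and analogously for $\by$, every such $\bz$ lies in $\D_{t-1}(\bu) \cap \D_{t-1}(\bv)$ for some $\bu \in \D_1(\bx)$ and $\bv \in \D_1(\by)$; by hypothesis $\bu \ne \bv$. A naive union bound over all $(\bu, \bv)$-pairs via Levenshtein's formula is too weak, so I would fix a canonical intermediate pair $(\bu, \bv)$ for each $\bz$---for instance, via the leftmost admissible deletions in $\bx$ and in $\by$---and then constrain its local shape tightly.

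To control these canonical pairs, I would first establish two structural lemmas. Lemma A (classical, essentially due to Levenshtein): for any two distinct $\bu, \bv$ of the same length, the intersection $\D_1(\bu) \cap \D_1(\bv)$ has size at most $2$. Lemma B: a precise local characterization of the pairs $(\bu, \bv)$ that achieve intersection size $1$ or $2$, showing that in each extremal case $\bu$ and $\bv$ must agree outside a short window of bounded length and differ there by a specific alternating-run pattern. This window classification is the engine driving the five terms of $N^{(1)}_t(n)$: the three ``size-$2$'' window-shapes (of lengths $4$, $5$, $7$) each contribute a factor $2$ from the two common length-$(n-2)$ subsequences inside the window together with a factor $D_{t-2}(n-m)$ from the free deletions outside, while the two ``size-$1$'' shapes (of lengths $6$, $7$) require an additional deletion inside the window and therefore contribute $D_{t-3}(n-m)$ each. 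Summing over the five configurations recovers the claimed bound; a natural sanity check is $t = 2$, where the $D_{t-3}$ terms vanish and the formula collapses to $N^{(1)}_2(n) = 6$, which can be verified by direct enumeration.

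The main obstacle will be Lemma B: producing a complete, non-redundant list of the five window-shapes and ruling out every other local configuration of $(\bx, \by, \bu, \bv)$ consistent with $\D_1(\bx) \cap \D_1(\by) = \varnothing$. Once that classification is pinned down, the remainder is bookkeeping: matching each window-shape to the correct offset $n - m$ with $m \in \{4, 5, 6, 7\}$, and confirming that the number of free deletion positions outside the window is exactly $t-2$ or $t-3$. The bound on $\rho$ then follows immediately, since a single-deletion-correcting code of redundancy $\log_2 n + O(1)$ (e.g.\ a VT code) is by definition an $(n, 1; \D_1)$-reconstruction code and hence inherits the $t$-deletion reconstruction guarantee.
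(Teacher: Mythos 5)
This theorem is not proved in the paper at all: it is imported from Gabrys and Yaakobi \cite{Gabrys.2018}, where the argument is a recursive prefix/suffix-stripping induction of exactly the kind Section~\ref{sec:completeproof} carries out for Theorem~\ref{thm:2-reconstruction} (split $\cS=\D_t(\bx)\cap\D_t(\by)$ into $\cS^{c}$ and $\cS^{\overline{c}}$, apply Lemma~\ref{lemma:dtn}(iii), recurse, and invoke the recursions $D_t(n)=D_t(n-1)+D_{t-1}(n-2)$). Your route is genuinely different, and it has a gap that I do not think can be patched as written.

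The gap is your Lemma B. The pairs of words whose single-deletion balls intersect in exactly two words are precisely the Type-A-confusable pairs (Lemma~\ref{char:type-A}): they have the form $\bu\ba\bv$ and $\bu\bara\bv$ with $\ba$ alternating of length at least two --- but $|\ba|$ is \emph{unbounded}, so such pairs do not ``agree outside a short window of bounded length.'' The same is true for intersection size one (Lemma~\ref{lemma:lev1form}: Hamming distance one, or Type-B-confusability as in Definition~\ref{def:type-B-confusable}, whose middle segment is arbitrary). Consequently the reading of the five terms of $N^{(1)}_t(n)$ as five window-shapes of lengths $4,5,6,7$ has no combinatorial basis: the offsets $n-4,\dots,n-7$ come from the depth of the recursion in \cite{Gabrys.2018}, not from window lengths, and deletions do not factor into independent ``inside the window'' and ``outside the window'' contributions (a deletion can merge runs across the boundary, and a single output $\bz$ arises from many intermediate pairs, so the claimed product $2\cdot D_{t-2}(n-m)$ double-counts and under-specifies simultaneously). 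Your canonicalization of the intermediate pair $(\bu,\bv)$ also does not rescue the union bound: it partitions the outputs among pairs, but you still must bound both the number of pairs that actually occur and the size of each cell, and the proposal supplies neither. The $t=2$ sanity check ($N^{(1)}_2(n)=6$, using $D_0(m)=1$ and $D_{-1}=0$) and the final reduction to a VT code for the redundancy claim are correct, but the core counting argument would not go through.
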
 

In the same spirit, we study the sequence reconstruction problem when the codebook $\C$ is an $(n,2;\D_1)$-reconstruction code.
Specifically, in Section~\ref{sec:2-reconstruct}, we show that 
if every channel introduces $t$ deletions, then it is possible to uniquely reconstruct
codewords from $\C$ with approximately $\nu_t(n)/2$ reads.

\subsection{Reconstruction Codes with $N=2$ for Single Deletions}

We motivate the case for reconstruction codes in the context of the single-deletion channel. 
As mentioned early, when we use the whole space $\{0,1\}^n$ as our codebook, we require $\nu_1(n)=3$ noisy reads to uniquely reconstruct any codeword. Hence, we have $\rho(n,N;\D_1)=0$ for $N\ge 3$.

In contrast, when $N=1$, or, when we have only one noisy read, 
we recover the usual notion of error-correcting codes and 
the classical Varshamov-Tenengolts (VT) code is an $(n,1;\D_1)$-reconstruction code whose redundancy is at most $\log_2(n+1)$ ~\cite{Levenshtein.1966}. Hence, we have $\rho(n,1;\D_1)=\log_2 n+\Theta(1)$.
Therefore, it remains to ask: how should we design the codebook when we have only two noisy reads? 
Or, what is the value of $\rho\left(n,2;\D_1\right)$? 

Now, the first construction of a $(n,2;\D_1)$-reconstruction code was proposed in \cite{chee2018coding} for the design of codes in racetrack memory. The codebook uses $\log_2\log_2 n +O(1)$ redundant bits and 
in \cite{Kiah.arxiv.2020}, we modified the construction to obtain codebooks that uniquely reconstruct codewords for the single-edit channel and its variants. 
The construction can be seen as a generalization of the classical Varshamov-Tenengolts (VT) code proposed by Levenshtein \cite{Levenshtein.1966} and the shifted VT codes proposed by Schoeny \etal{} \cite{Schoeny.2017}.

\begin{definition}[Constrained Shifted VT Codes \cite{chee2018coding, Kiah.arxiv.2020}]\label{def:csvt}
	For $n\ge P >0$ and $P$ even, let $c\in \bbZ_{1+P/2}$ and $d\in \bbZ_2$.
	The {\em constrained shifted VT code} $\CSVT(n,P;c,d)$ is defined to be the set of all words $\bx=x_1x_2\cdots x_n$ 
	such that the following holds. 
	\begin{enumerate}[(i)]
		\item $\syn(\bx) = c \pmod{1+P/2}$.
		\item $\sum_{i=1}^n x_i = d\pmod{2}$. 
		\item The longest 2-periodic run in $\bx$ is at most $P$.
	\end{enumerate}
	Here, $\syn(\bx)$ denotes the {\em VT syndrome} $\syn(\bx)\triangleq \sum_{i=1}^n ix_i$ and
	a {\em 2-periodic} run refers to a continguous substring $x_ix_{i+1}\cdots x_j$ where $x_k=x_{k+2}$ for all $i\le k\le j-2$. 
\end{definition}

When $P=2n$ and we remove Condition (ii)%
\footnote{When $P=2n$, then any 2-periodic run is at most $n<P$. Hence, Condition (iii) is always true.} 
we recover the classical VT code that corrects a single deletion.
On the other hand, when we remove the Condition (iii), we recover the shifted VT code that 
is used in the correction of a single burst of deletions \cite{Schoeny.2017}.
It was recently demonstrated that the CSVT code enables unique reconstruction whenever we have two distinct noisy reads.

\begin{theorem}[\hspace{-0.5pt}\cite{chee2018coding, Kiah.arxiv.2020}]\label{thm:code-t1-N2}
	For all choices of $c$ and $d$, we have that  $\CSVT(n,P;c,d)$ is an $(n, 2; \D_1)$-reconstruction code.
	Furthermore, if we set $P=\ceil{\log_2 n}+2$, the code $\CSVT(n,P;c,d)$ has redundancy $1+\log_2(\ceil{\log+2 n}+4)=\log_2 \log_2 n+O(1)$ 
	for some choice of $c$ and $d$.
	Thus, $\rho(n,2;\D_1)\le \log_2\log_2 n +O(1)$.
\end{theorem}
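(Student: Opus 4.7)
\emph{Plan.} The theorem splits into two claims: (A) $\CSVT(n,P;c,d)$ is an $(n,2;\D_1)$-reconstruction code for every $(c,d)$, and (B) with $P = \ceil{\log_2 n}+2$ an appropriate choice of $(c,d)$ attains redundancy $\log_2\log_2 n + O(1)$. For (A) I would argue by contradiction: suppose distinct $\bx, \by \in \CSVT(n,P;c,d)$ admit two common subsequences $\bz_1,\bz_2 \in \D_1(\bx)\cap\D_1(\by)$. A non-empty intersection forces the Hamming weights of $\bx$ and $\by$ to differ by at most one, since a single deletion either preserves weight (deleting a $0$) or decreases it by one (deleting a $1$); combined with the parity constraint (ii), this forces the two weights to coincide. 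The core step is then the bound $|\syn(\bx)-\syn(\by)| \le L/2$, where $L$ is the length of the longest 2-periodic substring occurring in $\bx$ or in $\by$. To derive it I would use the identity $\syn(\bz) = \syn(\bx) - p\cdot x_p - \sum_{j>p} x_j$ for $\bz$ obtained by deleting position $p$ of $\bx$, and the analogous identity for $\by$; equating the two representations of each $\syn(\bz_i)$ yields two equations relating the deletion positions and bits, and a case analysis on the deleted bit values and on the relative order of the deletion positions pins down a 2-periodic structure whose length is at least $2|\syn(\bx)-\syn(\by)|$. Since the classical Varshamov--Tenengolts construction forbids $\syn(\bx)\equiv\syn(\by) \pmod{n+1}$ whenever any common subsequence exists, $\syn(\bx)\ne\syn(\by)$ as integers, so $|\syn(\bx)-\syn(\by)|\ge 1$. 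Combined with $L \le P$ from condition (iii), this yields $1 \le |\syn(\bx)-\syn(\by)| \le P/2 < 1+P/2$, contradicting condition (i).

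For (B), a union bound over the $O(n)$ starting positions and the four 2-periodic patterns shows that the number of binary strings of length $n$ containing a 2-periodic run of length exceeding $P$ is $O(n\cdot 2^{n-P})$; with $P = \ceil{\log_2 n}+2$ this is bounded by a positive constant fraction of $2^n$, so a constant fraction of $\{0,1\}^n$ satisfies condition (iii). Distributing these strings among the $2(1+P/2) = O(\log n)$ classes indexed by $(c,d)$ and applying pigeonhole yields some $\CSVT(n,P;c,d)$ of size $\Omega(2^n/\log n)$, whose redundancy is $1 + \log_2(\ceil{\log_2 n}+4) = \log_2\log_2 n + O(1)$.

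The main obstacle is the syndrome-difference bound in (A). The positions at which $\bx$ and $\by$ disagree need not themselves form a 2-periodic substring---for instance $\bx = 00110$ and $\by = 01010$ satisfy $|\D_1(\bx)\cap\D_1(\by)|=2$ while their disagreement extends beyond any 2-periodic run of $\bx$---so the relevant 2-periodic structure must be extracted indirectly from the pair of deletion-ball constraints rather than read off directly from positional disagreement. The case analysis must simultaneously track whether each deletion removes a $0$-bit or a $1$-bit and the relative order of the deletion positions in the two strings; these considerations are precisely what motivate the three conditions defining $\CSVT$.
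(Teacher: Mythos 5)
The paper does not prove Theorem~\ref{thm:code-t1-N2} at all --- it is imported from \cite{chee2018coding, Kiah.arxiv.2020} --- so the comparison is against the cited argument, which your plan essentially reproduces. Both halves of your outline are sound, but the step you yourself flag as ``the main obstacle'' (extracting an alternating block of length $2|\syn(\bx)-\syn(\by)|$ from the existence of two common subsequences) is left as a sketch, and it is precisely the content of Lemma~\ref{char:type-A}, which the paper already states; you should invoke it rather than re-derive it. Granting that lemma, part (A) closes cleanly: if $|\D_1(\bx)\cap\D_1(\by)|=2$ then $\bx=\bu\ba\bv$ and $\by=\bu\bara\bv$ with $\ba$ alternating of length $j\ge 2$; if $j$ is odd the weights differ by one, contradicting condition (ii); if $j$ is even a direct computation gives $|\syn(\bx)-\syn(\by)|=j/2$ with $1\le j/2\le P/2<1+P/2$ (condition (iii) forces $j\le P$ since $\ba$ is a 2-periodic substring of $\bx$), contradicting condition (i). Your detour through the classical VT theorem to get $\syn(\bx)\ne\syn(\by)$ is therefore unnecessary, and your worked example $\bx=00110$, $\by=01010$ is in fact Type-A-confusable with $\bu=0$, $\ba=01$, $\bv=10$, so the structure is more directly visible than your commentary suggests. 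Part (B) is correct and matches the counting the paper itself uses in Corollary~\ref{cor:csvt}: at most $n\cdot 2^{n-P+1}\le 2^{n-1}$ words contain a 2-periodic run of length $P+1$ when $P=\ceil{\log_2 n}+2$, and pigeonholing the remaining $\ge 2^{n-1}$ words over the $2(1+P/2)=\ceil{\log_2 n}+4$ classes $(c,d)$ yields redundancy at most $1+\log_2(\ceil{\log_2 n}+4)=\log_2\log_2 n+O(1)$. In short: right approach and correct counting, but the combinatorial heart of claim (A) is a lemma available in the paper that you re-sketch rather than prove or cite.
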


In this paper, we demonstrate that the codes in Theorem~\ref{thm:code-t1-N2} are asymptotically {\em optimal}. 
Specifically, in Section~\ref{sec:upper}, we show that an $(n, 2; \D_1)$-reconstruction code requires at least $\log_2\log_2 n - O(1)$ redundant bits. 

To demonstrate this necessary condition, we first observe that $\nu_1(n)=2$ and thus, we need to characterize pairs of words whose single-deletion balls have intersection size exactly two. 
To do so, we have the following definition of confusability.

\begin{definition}\label{def:type-A-confusable}
Two words $\bx$ and $\by$ are {\em Type-A-confusable} if 
\[\bx=\bu\ba\bv, \mbox{ and } 
\by=\bu\bara\bv,
\]
for some subwords $\ba$, $\bu$, and $\bv$ such that 
$|\ba|\ge 2$, $\bara$ is the complement of $\ba$,
and $\ba=a_1a_2\ldots a_j$ is an {\em alternating sequence}, 
that is, $\ba$ is 2-periodic and $a_1\ne a_2$. 
\end{definition}

The following characterization was demonstrated in \cite{Kiah.arxiv.2020}.

\begin{lemma}[Type-A-confusability~\cite{Kiah.arxiv.2020}]\label{char:type-A}
	Let $\bx$ and $\by$ be binary words.
	We have that $|\D_1(\bx)\cap \D_1(\by)|= 2$ if and only if 
	$\bx$ and $\by$ are Type-A-confusable.
\end{lemma}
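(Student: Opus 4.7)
The plan is to prove the two implications separately, using an explicit two-element witness on one side and the longest common prefix/suffix decomposition on the other.

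For the backward direction (Type-A-confusability implies $|\D_1(\bx)\cap\D_1(\by)|=2$), suppose $\bx=\bu\ba\bv$ and $\by=\bu\bara\bv$ with $\ba=a_1\cdots a_j$ alternating and $j\ge 2$. The key algebraic point is that alternation gives $\bar{a}_i=a_{i+1}$ for $1\le i\le j-1$, which immediately yields the string identities $a_2a_3\cdots a_j=\bar{a}_1\bar{a}_2\cdots\bar{a}_{j-1}$ and $a_1a_2\cdots a_{j-1}=\bar{a}_2\bar{a}_3\cdots\bar{a}_j$. Hence $\bz_1:=\bu\,a_2\cdots a_j\,\bv$ and $\bz_2:=\bu\,a_1\cdots a_{j-1}\,\bv$ both lie in $\D_1(\bx)\cap\D_1(\by)$ (obtained by deleting $a_1$ vs.\ $\bar{a}_j$, and $a_j$ vs.\ $\bar{a}_1$, respectively). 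They are distinct because they first differ at position $|\bu|+1$ ($a_2$ versus $a_1$), so $|\D_1(\bx)\cap\D_1(\by)|\ge 2$. The matching upper bound $|\D_1(\bx)\cap\D_1(\by)|\le\nu_1(n)=2$ is Levenshtein's bound \eqref{eq:read-wholespace} with $t=1$.

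For the forward direction, let $\bp$ be the longest common prefix and $\bs$ the longest common suffix of $\bx\ne\by$, and write $\bx=\bp\,\bx_0\,\bs$, $\by=\bp\,\by_0\,\bs$ with $|\bx_0|=|\by_0|=m\ge 1$. I would analyse any common deletion $\bz$ by the positions $i,j$ at which a single bit is removed from $\bx$ and $\by$ respectively. Because $\bx$ and $\by$ can differ only inside $[|\bp|+1,n-|\bs|]$, the deletion interval $[\min(i,j),\max(i,j)]$ must cover the mismatch region, and comparing $\bz$ position by position forces a directional shift: $i\le j$ yields $\bx[k+1]=\by[k]$ for $i\le k\le j-1$ (Case A), while $i\ge j$ yields $\by[k+1]=\bx[k]$ on the corresponding range (Case B). Given two distinct common deletions, I then argue that one must be of Case A and the other of Case B; combining both constraints on the mismatch region yields $\bx[k+2]=\bx[k]$ throughout $\bx_0$, so $\bx_0$ is $2$-periodic, and maximality of $\bp$ gives $\bx_0[1]\ne\bx_0[2]$, so $\bx_0$ is alternating. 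The identity $\by[k]=\bx[k+1]$ simultaneously gives $\by_0=\overline{\bx_0}$, and taking $\bu=\bp$, $\ba=\bx_0$, $\bv=\bs$ exhibits the Type-A-confusable form.

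The main obstacle will be the ``same-case elimination'' step in the forward direction: ruling out two common deletions that both arise from a shift in the same direction. Handling this requires carefully tracking the boundary behaviour of the shift at the endpoints $|\bp|+1$ and $n-|\bs|$ and using the maximality of $\bp$ and $\bs$ to derive a contradiction (essentially, two same-direction shifts would collapse to a single common deletion or would extend one of $\bp,\bs$). Once that step is in place, the alternation of $\bx_0$ and the complementarity of $\by_0$ drop out cleanly from the combined Case A/Case B identities.
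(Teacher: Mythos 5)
The paper itself does not prove Lemma~\ref{char:type-A}; it is imported from the companion paper \cite{Kiah.arxiv.2020}, so there is no in-paper argument to compare yours against. Judged on its own terms, your proposal is sound and follows the natural route. The backward direction is complete: the alternation identities $\bar{a}_i=a_{i+1}$ and $\bar{a}_i=a_{i-1}$ do give the two common subsequences $\bu a_2\cdots a_j\bv$ and $\bu a_1\cdots a_{j-1}\bv$, these differ at position $|\bu|+1$ since $a_1\ne a_2$, and the matching upper bound is exactly $\nu_1(n)=2$ from \eqref{eq:read-wholespace}.

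The forward direction is the right decomposition (longest common prefix/suffix plus the directional shift identities $x_{k+1}=y_k$ or $y_{k+1}=x_k$ over the deletion window), but the step you yourself flag --- ruling out two common deletions of the same direction --- is the only place where the proof actually lives, and you leave it as a sketch. For the record, the route you indicate does close: if two common deletions both have $i\le j$, then each forces $x_k=y_k$ for $k<i$, so both windows must start at or before the first mismatch position $p+1$; the shift identity on $[i,p]$ combined with $x_k=y_k$ there gives $x_{i}=x_{i+1}=\cdots=x_{p+1}$, so the two deleted positions of $\bx$ lie in a common run and the two ``distinct'' common deletions collapse to one word, a contradiction. You should also say a word about why the mismatch block has length $m\ge 2$ (equivalently, why Hamming distance one is excluded): this is needed for $\ba$ to be a legitimate alternating block of length at least $2$, and it follows from the $t=1$ case of Lemma~\ref{lemma:hamming1forgeneral}, which shows that Hamming-distance-one pairs have intersection exactly $1$. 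With those two points written out, the argument is complete.
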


In Section~\ref{sec:2-reconstruct}, we derive an analogous result that characterizes when two single-deletion balls intersect at exactly one word. Using this characterization, we then analyse the read coverage of an $(n,2;\D_1)$-reconstruction code.

\subsection{Main Contributions}

In summary, our contributions are as follows.

\begin{itemize}
\item In Section~\ref{sec:upper}, we consider the case where $t=1$ and $N=2$,
and  demonstrate that a $(n,2;\D_1)$-reconstruction code requires at least $\log_2\log_2 n-O(1)$ bits of redundancy.
Therefore, the CSVT code constructed in Theorem~\ref{thm:code-t1-N2} is asymptotically optimal and 
we have that $\rho(n,2;\D_1)=\log\log n+\Theta(1)$.
Furthermore, we have the complete solution for $\rho$ in the case for $t=1$.
\vspace{2mm}

\begin{theorem}\label{thm:complete-solution}
The value $\rho(n,N;\D_1)$ satisfies
	\[ \rho(n,N;\D_1)=
	\begin{cases}
	\log_2 n + \Theta(1), & \text{when } N = 1,\\
	\log_2 \log_2 n + \Theta(1), & \text{when } N = 2,\\
	0, & \text{when } N\ge 3.
	\end{cases}\]
\end{theorem}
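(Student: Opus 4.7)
The proof splits into three cases. The cases $N\geq 3$ and $N=1$ rely on well-known facts, while the substantive case $N=2$ requires the new lower bound established in Section~\ref{sec:upper}, with matching upper bound already given by Theorem~\ref{thm:code-t1-N2}.

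For $N\geq 3$, I would invoke Levenshtein's formula~\eqref{eq:read-wholespace} with $t=1$: since $\nu_1(n)=2D_0(n-2)=2<N$, the cube $\{0,1\}^n$ itself is an $(n,N;\D_1)$-reconstruction code, so $\rho(n,N;\D_1)=0$. For $N=1$, the upper bound $\log_2 n + O(1)$ is the classical Varshamov-Tenengolts construction (pigeonhole on the VT syndrome modulo $n+1$). The matching lower bound is a standard sphere-packing argument: the balls $\D_1(\bx)$ for distinct $\bx\in\C$ are disjoint subsets of $\{0,1\}^{n-1}$, and $|\D_1(\bx)|$ equals the run-count $r(\bx)$ of $\bx$, which has mean $(n+1)/2$. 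Restricting to the majority set of words with $r(\bx)=\Omega(n)$ yields $|\C|\leq 2^n/\Omega(n)$, hence redundancy at least $\log_2 n - O(1)$.

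For $N=2$, the upper bound is Theorem~\ref{thm:code-t1-N2}. For the lower bound I would invoke Lemma~\ref{char:type-A}: $\C$ is an $(n,2;\D_1)$-reconstruction code if and only if no two distinct codewords are Type-A-confusable. The plan is to bound the independence number of the Type-A-confusability graph $G$ on $\{0,1\}^n$. The guiding intuition is that the CSVT constraint bounding the longest 2-periodic run by $O(\log_2 n)$ is essentially necessary. First I would show, via a standard longest-run concentration bound, that a $1-o(1)$ fraction of words contain a maximal alternating substring of length $L=\lfloor\log_2 n\rfloor - O(1)$. For each such word, I would fix a canonical such run and group together all words that agree with it outside this run, producing families within which many pairs are Type-A-confusable. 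A careful cover of $\{0,1\}^n$ by such families of size $\Theta(\log_2 n)$, each contributing only $O(1)$ codewords to $\C$, would give $|\C|\leq O(2^n/\log_2 n)$ and hence redundancy $\geq \log_2\log_2 n - O(1)$.

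The main obstacle lies in the last step. Flipping different sub-intervals of a common alternating run yields many Type-A-confusable partners of the base word, but two such partners may differ by a single bit and hence fail to be Type-A-confusable with each other; thus the natural families in $G$ are paths rather than cliques. Overcoming this will require either a fractional-cover / chromatic-number bound for $G$, or a cleaner canonicalization — for instance, restricting to flips that share one common endpoint, possibly combined with a parity condition — that cleanly groups $\Theta(\log_2 n)$ mutually Type-A-confusable words. Correctly identifying this canonical structure and verifying that it covers $\{0,1\}^n$ almost completely is the crux of the lower-bound argument.
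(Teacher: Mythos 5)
Your treatment of the cases $N\ge 3$ (via $\nu_1(n)=2$) and $N=1$ (VT code plus a sphere-packing count over $\{0,1\}^{n-1}$ weighted by run-count) is correct and matches what the paper relies on. For $N=2$ you correctly reduce the problem to bounding the independence number of the Type-A-confusability graph $\G(n)$ via Lemma~\ref{char:type-A}, and you correctly locate the crux. But you do not resolve it, and the unresolved step is precisely where the paper's proof lives. The obstacle you name --- that flipping different sub-intervals of a common alternating run produces words that need not be Type-A-confusable with \emph{each other}, so your families are paths rather than cliques --- is real, and your proposal ends by listing two possible remedies without carrying either out. As written, the argument does not yield $|\C|\le O(2^n/\log_2 n)$.

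The paper's fix is the ``common endpoint'' canonicalization you conjecture, made explicit as a clique cover. Fix $\ell=\lfloor\tfrac12(1-\epsilon)\log_2 n\rfloor$, cut each word into $m=\lfloor n/(2\ell)\rfloor$ blocks of length $2\ell$, and let $\Lambda=\{(01)^j(10)^{\ell-j}\}\cup\{(10)^j(01)^{\ell-j}\}$ for $j\in[\ell]$. For fixed $\bu,\bw$ the set $\{\bu(01)^j(10)^{\ell-j}\bw: j\in[\ell]\}$ \emph{is} a clique of size $\ell$: for $i<j$ the two words are $\bu(01)^i\ba(10)^{\ell-j}\bw$ and $\bu(01)^i\bara(10)^{\ell-j}\bw$ with $\ba=(10)^{j-i}$ alternating, hence Type-A-confusable. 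Covering is arranged by anchoring each word at its \emph{leftmost} block lying in $\Lambda$ (forcing the earlier blocks into $\widetilde\Lambda=\{0,1\}^{2\ell}\setminus\Lambda$ so the cliques are disjointly indexed), and words with no block in $\Lambda$ are covered by singletons; there are $(2^{2\ell}-2\ell)^m=2^n f(n)$ of these with $\ell f(n)\to 0$ for this choice of $\ell$ (Lemma~\ref{claim:upperbound}), so the clique cover has size $2^n\ell^{-1}(1+o(1))$ and the bound on $|\C|$ follows from the clique-cover bound on independent sets. Your concentration-of-longest-run route could in principle be made to work, but you would still have to produce these cliques and a disjoint indexing of them; without that construction the lower bound for $N=2$ is not established.
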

Theorem~\ref{thm:complete-solution} shows that as the number of noisy reads increases, the optimal number of redundant bits required is gracefully reduced from $\log_2 n + \Theta(1)$ to $\log_2\log_2 n + \Theta(1)$, and then to zero.
\item In Section~\ref{sec:2-reconstruct}, we consider the case $t\ge 2$ and we show that if $|\D_1(\bx)\cap \D_1(\by)|=1$, then $|\D_t(\bx)\cap \D_t(\by)| \le D_{t-1}(n-1)+\nu_{t-2}(n-3)$. 
Hence, for the special case of $t=2$, an $(n,2;\D_1)$-reconstruction code can uniquely reconstruct codewords with $n+1$ distinct reads. 
By refining our arguments, we show that with appropriate choice of $P$, 
the constrained SVT codes from Theorem~\ref{thm:code-t1-N2} can uniquely reconstruct codewords with strictly less than $n+1$ distinct reads.
\end{itemize}

\section{Lower Bound for $t=1$ and $N=2$}
\label{sec:upper}

In this section, we provide a lower bound on the number of redundant bits of an $(n,2;\D_1)$-reconstruction code $\C$,
or equivalently, an upper bound on the size of $\C$. 
To this end, we borrow graph theoretic tools and consider the graph $\G(n)$ whose vertices correspond to $\{0,1\}^n$.
The vertices $\bx$ and $\by$ are adjacent if and only if $|\D_1(\bx)\cap \D_1(\by)|=2$, or equivalently, 
$\bx$ and $\by$ are Type-A-confusable.

Hence, $\C$ is an $(n,2;\D_1)$-reconstruction code if and only if the
corresponding set of vertices are independent in $\G(n)$.

\begin{definition}A collection $\Q$ of cliques is a {\em clique cover} of $\G$ if 
every vertex in $\G$ belongs to some clique in $\Q$.
\end{definition}

We have the following fact from graph theory 
(see for example, \cite{Knuth.1994}).

\begin{theorem}
	If $\Q$ is a clique cover, then the size of any independent set is at most $|\Q|$.
\end{theorem}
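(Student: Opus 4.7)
The plan is to prove this by exhibiting an injection from any independent set into the clique cover $\Q$. Let $I$ be an independent set in $\G$ and let $\Q=\{Q_1, Q_2, \ldots, Q_m\}$ be a clique cover, so every vertex of $\G$ belongs to at least one $Q_j$. Define a map $\varphi : I \to \Q$ by choosing, for each $v \in I$, any clique $Q_j \in \Q$ that contains $v$ (such a clique exists by the clique-cover property; invoke the axiom of choice or simply fix an arbitrary ordering on $\Q$ and pick the smallest index).

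The key step is to show $\varphi$ is injective. Suppose $\varphi(u) = \varphi(v) = Q_j$ for distinct $u, v \in I$. Then both $u$ and $v$ lie in the clique $Q_j$, so by the definition of a clique they must be adjacent in $\G$. But this contradicts the assumption that $I$ is an independent set, which forbids any two of its vertices from being adjacent. Hence $\varphi$ is injective, and therefore $|I| \le |\Q|$.

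I expect no genuine obstacle here; the statement is a standard pigeonhole-style observation and the proof is essentially a one-liner. The only subtle point is making the choice function $\varphi$ explicit (a vertex may belong to several cliques in $\Q$), but this is handled trivially by fixing any selection rule. Since the bound holds for an arbitrary independent set $I$, we conclude that the independence number $\alpha(\G)$ is at most $|\Q|$, which is the desired conclusion and which will subsequently be applied to the Type-A-confusability graph $\G(n)$ to upper-bound the size of any $(n,2;\D_1)$-reconstruction code.
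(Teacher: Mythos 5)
Your proof is correct: the injection from the independent set into the clique cover, justified by the fact that two distinct vertices sharing a clique would have to be adjacent, is exactly the standard pigeonhole argument for this fact. The paper itself gives no proof and merely cites it as a known result from graph theory, so your write-up supplies precisely the argument the authors are implicitly relying on; nothing is missing.
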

Therefore, our objective is to construct a clique cover for $\G(n)$.
To this end, we consider another parameter $\ell$,
and set $m=\floor{n/(2\ell)}$ and $r=n-2\ell m$.
We divide each word of length $n$ into $m$ blocks of length $2\ell$ and one block of length $r$.

Set 
\[\small
\Lambda = \Big\{(01)^j(10)^{\ell-j}: j\in [\ell]\Big\}~\cup~
\Big\{(10)^j(01)^{\ell-j}: j\in [\ell]\Big\}
\]
and 
$\widetilde{\Lambda} = \{0,1\}^{2\ell} \setminus\Lambda$. 
So, $|\Lambda|=2\ell$ and $|\widetilde{\Lambda}|=2^{2\ell}-2\ell$.
To construct our clique cover $\Q(n,\ell)$, we consider two types of cliques.
The first type of cliques are singletons of the form
\[ S_\bx = \{\bx\}, \mbox{ where } \bx \in \widetilde{\Lambda}^m \times \{0,1\}^r . \]
The second type of cliques are cliques of size $\ell$. 
Here, we define
\[ \Gamma = \left\{(\bu, \bw, i ): \bu\in \widetilde{\Lambda}^{i-1},\, \bw \in \{0,1\}^{2\ell(m-i+)+r},\, i\in[m]\right\}.\]
For each $\bz=(\bu,\bw,i)$, we define two sets of vertices (which we later show to be cliques of size $\ell$):
\begin{align*}
Q_\bz^{(0)} & =\left\{\bu(01)^j(10)^{\ell-j} \bw: j\in [\ell]\right\},\\
Q_\bz^{(1)} & =\left\{\bu(10)^j(01)^{\ell-j} \bw: j\in [\ell]\right\}.
\end{align*}
We then define 
\[ \Q(n,\ell) = \left\{S_\bx: \bx \in \widetilde{\Lambda}^m\times \{0,1\}^r\right\} 
\cup \left\{Q_\bz^{(0)}, Q_\bz^{(1)}: \bz\in \Gamma\right\}.\]

\begin{lemma}$\Q(n,\ell)$ is a clique cover for $\G(n)$.
\end{lemma}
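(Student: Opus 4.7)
The plan is to verify the two requirements of a clique cover separately: (i) every set listed in $\Q(n,\ell)$ is a clique of $\G(n)$, and (ii) every vertex $\bx\in\{0,1\}^n$ lies in at least one of those sets. Requirement (ii) is essentially a partition argument and will be straightforward; the work is in (i), and specifically in checking Type-A-confusability between any two members of $Q_\bz^{(0)}$ (and analogously $Q_\bz^{(1)}$).

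For (i), the singletons $S_\bx$ are trivially cliques. For $Q_\bz^{(0)}$ I pick two distinct members $\bu(01)^{j}(10)^{\ell-j}\bw$ and $\bu(01)^{j'}(10)^{\ell-j'}\bw$ with $j<j'$. Using the factorizations $(01)^{j'}=(01)^{j}(01)^{j'-j}$ and $(10)^{\ell-j}=(10)^{j'-j}(10)^{\ell-j'}$, both words share the common prefix $\bu(01)^{j}$ and the common suffix $(10)^{\ell-j'}\bw$, and differ only in the middle segment, which is $(10)^{j'-j}$ in the first word and $(01)^{j'-j}$ in the second. Since $(10)^{j'-j}$ is an alternating string of length $2(j'-j)\ge 2$ whose bitwise complement is precisely $(01)^{j'-j}$, the two words satisfy Definition~\ref{def:type-A-confusable}; Lemma~\ref{char:type-A} then places an edge between them in $\G(n)$. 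The case of $Q_\bz^{(1)}$ is symmetric.

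For (ii), I split any $\bx\in\{0,1\}^n$ into $m$ consecutive blocks of length $2\ell$ followed by a tail of length $r$. If every block lies in $\widetilde{\Lambda}$, then $\bx\in\widetilde{\Lambda}^{m}\times\{0,1\}^{r}$ and the singleton $S_\bx$ covers it. Otherwise, let $i$ be the \emph{smallest} index whose block belongs to $\Lambda$, let $\bu$ be the concatenation of the first $i-1$ blocks (so $\bu\in\widetilde{\Lambda}^{i-1}$), and let $\bw$ be the suffix of length $2\ell(m-i)+r$ following block $i$. The $i$-th block then has one of the two forms $(01)^{j}(10)^{\ell-j}$ or $(10)^{j}(01)^{\ell-j}$ for some $j\in[\ell]$, placing $\bx$ in $Q_{(\bu,\bw,i)}^{(0)}$ or $Q_{(\bu,\bw,i)}^{(1)}$ respectively.

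The main obstacle is the confusability check in step (i): one must identify the correct maximal common prefix and suffix so that the differing middle segment is exhibited as a single alternating block together with its complement. Once this factorization is spotted, everything reduces to reading off the definition and invoking Lemma~\ref{char:type-A}, so the remainder of the argument is essentially bookkeeping on blocks of length $2\ell$.
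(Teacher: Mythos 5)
Your proof is correct and follows essentially the same route as the paper's: the identical factorization $\bu(01)^{j}(01)^{j'-j}(10)^{\ell-j'}\bw$ versus $\bu(01)^{j}(10)^{j'-j}(10)^{\ell-j'}\bw$ to exhibit Type-A-confusability of members of $Q_\bz^{(0)}$, and the same first-block-in-$\Lambda$ argument for coverage. No issues.
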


\begin{proof}
	Clearly, all singletons are cliques. 
	Next, we show that the $\ell$-set $Q^{(\mu)}_\bz$ is a clique for all $\bz\in\Gamma$ and $\mu\in\{0,1\}$.
	We assume $\mu=0$ and the proof for $\mu=1$ is similar.
	
	Let $\bx=\bu(01)^i(10)^{\ell-i}\bw$ and $\by=\bu(01)^j(10)^{\ell-j}\bw$ be two words in $Q^{(0)}_\bz$.
	Without loss of generality, let $i<j$. Then we can rewrite $\bx$ and $\by$ as
	\begin{align*}
	\bx &  = \bu(01)^i(10)^{j-i}(10)^{\ell-j}\bw, \\
	\by &  = \bu(01)^i(01)^{j-i}(10)^{\ell-j}\bw.
	\end{align*}
	Thus, $\bx$ and $\by$ are Type-A-confusable 
	and so, $\bx$ and $\by$ are adjacent in $\G(n)$.
	Therefore, $Q^{(0)}_\bz$ is a clique.
	
	It remains to show that any word $\bx\in\{0,1\}^n$ belongs to some clique in $\Q(n,\ell)$.
	If $\bx\in\widetilde{\Lambda}^m$, then $\bx\in S_\bx$.
	Otherwise, $\bx\not\in\widetilde{\Lambda}^m$ and one of the $m$ subblocks of $\bx$ belongs to $\Lambda$.
	Let the $i$th subblock be the first subblock from the left that belongs to $\Lambda$. 
	Hence, this subblock is either of the form $(01)^j(10)^{\ell-j}$ or $(10)^j(01)^{\ell-j}$ for some $j\in [\ell]$.
	In the first case, $\bx$ belongs to $Q^{(0)}_{(\bu,\bw,i)}$ where $\bu$ is the first $(i-1)$ subblocks and $\bw$ is the last $(m-i+1)$ subblocks. 
	In the second case, $\bx$ belongs to $Q^{(1)}_{(\bu,\bw,i)}$ where $\bu$ and $\bw$ are similarly defined.
\end{proof}

\begin{example}\label{exa:ell2}
Set $\ell=2$ and so, $\Lambda = \{0110,0101,1001,1010\}$.
When $m=3$, a possible element $\bz$ in $\Gamma$ is the triple $(0000,1000,1)$ and 
the cliques corresponding to $\bz$ are
\begin{align*}
Q_\bz^{(0)} & =\left\{000001101000,000001011000\right\},\\
Q_\bz^{(1)} & =\left\{000010011000,000010101000\right\}.
\end{align*}
For general $m$, since $|\widetilde{\Lambda}|=12$, the number of singletons is $12^m$.
Furthermore, the number of $\ell$-cliques is $2|\Gamma|$. 
Since the size of $\Gamma$ is given by $\sum_{i=1}^m 12^{i-1} 2^{4(m-i)}=2^{n-2}(1-(3/4)^m)$,
we have that the size of the clique cover $\Q(n,2)$ is
\[2\cdot \left(2^{n-2}(1-(3/4)^m)\right)+12^m = 2^{n-1}(1+o(1)).\]
Therefore, $\log |\Q(n,2)| = n-1+o(1)$.
Thus, an $(n,2;\D_1)$-reconstruction code requires at least one redundant bit asymptotically. \qed
\end{example}

To obtain the lower bound of $\log_2\log_2 n - o(1)$ redundant bits, 
we refine our analysis by allowing $\ell$ to grow with $n$.

Now, we write $\lambda=|\widetilde{\Lambda}|=2^{2\ell}-2\ell$. 
Similar to the analysis in Example~\ref{exa:ell2}, we have the following lemma.

\begin{lemma}
	The size of $\Q(n,\ell)$ is given by 
	\[2^n \left\{\left(1-\frac{2\ell}{2^{2\ell}}\right)^{\floor{\frac{n}{2\ell}}}+\frac{1}{\ell}\left( 1-\left(1-\frac{2\ell}{2^{2\ell}}\right)^{\floor{\frac{n}{2\ell}}} \right)   \right\}.\]
\end{lemma}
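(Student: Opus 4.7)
The plan is a direct counting argument: enumerate the two types of cliques, observe that the count of the second type is a geometric sum, and then factor out $2^n$ using the identity $2^{2\ell} - \lambda = 2\ell$.

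First I would count the singletons. By definition, singletons are indexed by words in $\widetilde{\Lambda}^m \times \{0,1\}^r$, so there are exactly $\lambda^m 2^r$ of them. Using $n = 2\ell m + r$, this equals $2^n \cdot (\lambda/2^{2\ell})^m = 2^n\bigl(1 - 2\ell/2^{2\ell}\bigr)^m$.

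Next I would count the $\ell$-cliques, which come in pairs $Q^{(0)}_\bz, Q^{(1)}_\bz$ indexed by $\bz=(\bu,\bw,i)\in\Gamma$. By the definition of $\Gamma$,
\[
|\Gamma| \;=\; \sum_{i=1}^{m} \lambda^{i-1}\, 2^{2\ell(m-i)+r} \;=\; 2^r \cdot \frac{2^{2\ell m} - \lambda^{m}}{2^{2\ell}-\lambda},
\]
which is a finite geometric series in ratio $\lambda/2^{2\ell}$. The key simplification is that $2^{2\ell}-\lambda = 2\ell$, so
\[
2|\Gamma| \;=\; \frac{2 \cdot 2^r \bigl(2^{2\ell m} - \lambda^{m}\bigr)}{2\ell} \;=\; \frac{2^{n} - \lambda^{m} 2^{r}}{\ell} \;=\; \frac{2^{n}}{\ell}\Bigl(1 - \bigl(1 - 2\ell/2^{2\ell}\bigr)^m\Bigr).
\]

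Finally I would add the two contributions. Since $|\Q(n,\ell)| = \lambda^m 2^r + 2|\Gamma|$, substituting the expressions above and factoring out $2^n$ gives
\[
|\Q(n,\ell)| \;=\; 2^n\left\{\bigl(1 - 2\ell/2^{2\ell}\bigr)^{m} + \frac{1}{\ell}\Bigl(1 - \bigl(1 - 2\ell/2^{2\ell}\bigr)^{m}\Bigr)\right\},
\]
which, after replacing $m$ by $\floor{n/(2\ell)}$, is exactly the claimed formula.

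There is no real obstacle here; the only step requiring any care is matching the geometric sum's denominator $2^{2\ell}-\lambda$ to $2\ell$ and correctly tracking the leftover factor $2^r$ so that the final expression cleanly factors as $2^n$ times a convex combination of $1$ and $1/\ell$. Once that bookkeeping is done, the rest is algebraic substitution.
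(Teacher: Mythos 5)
Your proof is correct and follows the same route as the paper's: count the singletons, sum the geometric series for $|\Gamma|$, and use $2^{2\ell}-\lambda=2\ell$ to collapse the denominator. In fact you are slightly more careful than the paper's own proof sketch, which drops the factor $2^r$ from both counts (writing $\lambda^m$ singletons and $|\Gamma|=\sum_{i=1}^{m}\lambda^{i-1}2^{2\ell(m-i)}$, so its intermediate expression is literally valid only when $2\ell$ divides $n$); your bookkeeping of the $2^r$ factor shows that the stated formula holds for all $n$.
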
 

\begin{proof}
	Recall that $m=\floor{n/(2\ell)}$. The number of singletons is $\lambda^m$, while the number of $\ell$-cliques is $2|\Gamma|$, 
	where $|\Gamma|=\sum_{i=1}^m \lambda^{i-1} 2^{2\ell(m-i)}$. Hence, the size of  $\Q(n,\ell)$ is
	{\small
		\[
		\lambda^{m} + 2 \sum_{i=1}^{m}{\lambda^{i-1} 2^{2\ell(m-i)}}
		= \lambda^{m} + 2^{n-2\ell+1}\frac{(\frac{\lambda}{2^{2\ell}})^{m}-1}{\frac{\lambda}{2^{2\ell}}-1}.
		\]}
	Straightforward manipulations then yield the lemma.
\end{proof}
We set $\ell = \floor{\frac{1}{2} \left( 1-\epsilon\right) \log_2 n}$  where $0 <\epsilon <1$ 
and write $f(n)=\left(1-\frac{2\ell}{2^{2\ell}}\right)^{\frac{n}{2\ell}}$.
Hence,
\begin{align*}
\log_2 |\Q(n,\ell)| &= n - \log_2 \ell + \log_2 (1+(\ell -1)f(n))\\
&\leq n - \log_2 \ell + \log_2 (1+\ell f(n)).
\end{align*}

Since $\log_2 \ell \geq \log_2\log_2 n - O(1)$, it suffices to show that $\log_2 (1+\ell f(n))=o(1)$. 

\begin{lemma}\label{claim:upperbound}
	We have that $\lim_{n \to \infty}{\ell f(n)}=0$, or equivalently, $\lim_{n\to\infty} \ln(\ell f(n))=-\infty$. 
\end{lemma}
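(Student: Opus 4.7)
The plan is to take logarithms and show the dominant term is $-n^{\epsilon}$, which easily swallows $\ln\ell = O(\log\log n)$. Concretely, I would write
\[
\ln(\ell f(n)) = \ln\ell + \frac{n}{2\ell}\,\ln\!\left(1-\frac{2\ell}{2^{2\ell}}\right).
\]
Since $2\ell/2^{2\ell}\in(0,1)$ for all $\ell\ge 1$, the elementary inequality $\ln(1-x)\le -x$ gives
\[
\ln(\ell f(n)) \le \ln\ell - \frac{n}{2\ell}\cdot\frac{2\ell}{2^{2\ell}} = \ln\ell - \frac{n}{2^{2\ell}}.
\]

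Next I would use the choice $\ell=\floor{\tfrac{1}{2}(1-\epsilon)\log_2 n}$ to control $2^{2\ell}$. Because $2\ell \le (1-\epsilon)\log_2 n$, we get $2^{2\ell}\le n^{1-\epsilon}$, and therefore
\[
\frac{n}{2^{2\ell}} \ge \frac{n}{n^{1-\epsilon}} = n^{\epsilon}.
\]
Substituting back,
\[
\ln(\ell f(n)) \le \ln\ell - n^{\epsilon}.
\]

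Finally, $\ln\ell \le \ln\log_2 n = O(\log\log n)$ grows much slower than $n^{\epsilon}$ for any fixed $\epsilon>0$, so the right-hand side tends to $-\infty$ as $n\to\infty$. This proves $\ln(\ell f(n))\to -\infty$, i.e.\ $\ell f(n)\to 0$, which is what was claimed. The whole argument is essentially a one-liner after the $\ln(1-x)\le -x$ step; I do not anticipate an obstacle, only a need to keep the bookkeeping on the floor function clean (which can be handled by noting $2\ell\le (1-\epsilon)\log_2 n$ holds directly from the definition of $\ell$, with no loss).
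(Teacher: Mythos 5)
Your proof is correct and rests on the same key estimate as the paper's: the inequality $\ln(1-x)\le -x$ combined with $2^{2\ell}\le n^{1-\epsilon}$ to get $|\ln f(n)|\gtrsim n^{\epsilon}$, which dwarfs $\ln\ell=O(\ln\log_2 n)$. The paper packages this as a limit of the ratio $\ln\ell/\ln f(n)$ before recombining, whereas you bound the sum directly by $\ln\ell-n^{\epsilon}$; yours is the more streamlined write-up of the same argument.
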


\begin{proof}
	First, we show that 
	\begin{equation}
	\lim_{n \to \infty}{\frac{\ln \ell}{\ln f(n)}}=0. \label{eq:f} 
	\end{equation}
	Note that for $0<x<1$, we have $|\ln (1-x)| \geq x$.
	Therefore,
	\begin{align*}
	\lim_{n \to \infty}{\left|\frac{\ln \ell}{\ln f(n)}\right|}&=\lim_{n \to \infty}{\left|\frac{\ln \ell}{\floor{\frac{n}{2\ell}}  \ln\left( 1-\frac{2\ell}{2^{2\ell}}\right)}\right|} \\
	&\leq \lim_{n \to \infty}{\left|\frac{\ln \ell}{\left( \frac{n}{2\ell}-1  \right)  \frac{2\ell}{2^{2\ell}}}\right|} \\
	&=\lim_{n \to \infty}{\left|\frac{2^{2\ell}\ln \ell}{n-2 \ell}\right|} \\
	&\leq \lim_{n \to \infty}{\left|\frac{2^{2+\left( 1-\epsilon\right) \log_2 n}\ln \ell}{n-2 \ell}\right|}\\
	&=4 \lim_{n \to \infty}{\left|\frac{n \ln \ell}{n^{\epsilon}\left(n-2 \ell\right)}\right|}\\
	&=4  \lim_{n \to \infty}{\left|\frac{n}{n-2 \ell}\right|}\lim_{n \to \infty}{\left|\frac{\ln \ell}{n^{\epsilon}}\right|}\\
	&=4 \times 1 \times 0=0,
	\end{align*}
	which implies \eqref{eq:f}. Note that since $\lim_{n \to \infty}{\ln \ell}=\infty$, and $f(n)<1$ for sufficiently large $n$, combined with \eqref{eq:f}, this implies that $\lim_{n \to \infty}{\ln f(n)}=-\infty$. Therefore, together with \eqref{eq:f}, we have the following:
	\begin{align*}
	\lim_{n \to \infty}{\ln (\ell f(n)) }&=\lim_{n \to \infty}{\ln \ell + \ln f(n)} \\
	&= \lim_{n \to \infty}{\left(\ln f(n)\right) \left(1+\frac{\ln \ell}{\ln f(n)}\right)}  \\
	&= \lim_{n \to \infty}{\ln f(n)} \lim_{n \to \infty} {\left(1+\frac{\ln \ell}{\ln f(n)}\right)} \\
	&=\lim_{n \to \infty}{\ln f(n)}=-\infty. \qedhere
	\end{align*}
\end{proof}

Therefore, the results in this section can be summarized in following theorem.
\begin{theorem}
Let $\C$ be an $(n,2;\D_1)$-reconstruction code.
For $\epsilon>0$, we have that
\begin{equation}
\log_2 |\C| \le n - \log_2\log_2 n + \log_2(1-\epsilon)+o(1).
\end{equation}
Therefore, $\rho(n,2;\D_1) = \log_2\log_2 n - O(1)$.
Combining with Theorem~\ref{thm:code-t1-N2}, we have that $\rho(n,2;\D_1) = \log_2\log_2 n +\Theta(1)$.
\end{theorem}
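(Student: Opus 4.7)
The plan is to combine the clique-cover framework already developed in this section with the explicit size formula for $\Q(n,\ell)$ and the asymptotic estimate in Lemma~\ref{claim:upperbound}. Since $\C$ is an $(n,2;\D_1)$-reconstruction code, no two codewords are Type-A-confusable, so by Lemma~\ref{char:type-A} the set $\C$ is an independent set of the graph $\G(n)$. Because $\Q(n,\ell)$ is a clique cover of $\G(n)$, the standard graph-theoretic inequality gives $|\C| \le |\Q(n,\ell)|$ for every valid choice of $\ell$.

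Next I would plug in the formula
\[
|\Q(n,\ell)| = 2^n \left\{ \left(1-\tfrac{2\ell}{2^{2\ell}}\right)^{\floor{n/(2\ell)}} + \tfrac{1}{\ell}\Bigl(1-\left(1-\tfrac{2\ell}{2^{2\ell}}\right)^{\floor{n/(2\ell)}}\Bigr)\right\}
\]
from the preceding lemma, and rewrite the bound using $f(n) = (1-2\ell/2^{2\ell})^{n/(2\ell)}$. Taking logarithms,
\[
\log_2|\C| \le n - \log_2 \ell + \log_2\bigl(1 + (\ell-1)f(n)\bigr) \le n - \log_2 \ell + \log_2\bigl(1 + \ell f(n)\bigr).
\]
The first gain comes from $-\log_2 \ell$ and the second term is a small perturbation we want to absorb into $o(1)$.

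Choosing $\ell = \floor{\tfrac{1}{2}(1-\epsilon)\log_2 n}$ as in the section yields $\log_2 \ell \ge \log_2\log_2 n + \log_2(1-\epsilon) - o(1)$ (the $-o(1)$ coming from the floor and the $-1$ in $\tfrac{1}{2}$). By Lemma~\ref{claim:upperbound}, $\ell f(n) \to 0$, hence $\log_2(1+\ell f(n)) = o(1)$. Substituting these two estimates gives
\[
\log_2|\C| \le n - \log_2\log_2 n + \log_2(1-\epsilon) + o(1),
\]
which is exactly the claimed bound. Since $\epsilon>0$ is arbitrary, this implies $\rho(n,2;\D_1) \ge \log_2\log_2 n - O(1)$, and combining with the construction in Theorem~\ref{thm:code-t1-N2} yields $\rho(n,2;\D_1) = \log_2\log_2 n + \Theta(1)$.

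The main obstacle, already handled in Lemma~\ref{claim:upperbound}, is verifying $\ell f(n)\to 0$ for the chosen $\ell$: the exponent $\floor{n/(2\ell)}$ of $(1-2\ell/2^{2\ell})$ must grow fast enough relative to $2^{2\ell}/(2\ell)$ to dominate the slowly growing $\ell$, which is precisely why the exponent of $n$ inside $\ell$ must be strictly below $\tfrac{1}{2}$, hence the need to introduce the $\epsilon$. The rest of the argument is a clean assembly of existing pieces.
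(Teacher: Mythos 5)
Your proposal is correct and follows essentially the same route as the paper: it uses the Type-A-confusability characterization to view $\C$ as an independent set of $\G(n)$, bounds $|\C|$ by the size of the clique cover $\Q(n,\ell)$, substitutes the size formula, chooses $\ell=\floor{\frac{1}{2}(1-\epsilon)\log_2 n}$, and invokes Lemma~\ref{claim:upperbound} to absorb $\log_2(1+\ell f(n))$ into $o(1)$. This is exactly the paper's argument, so no further comparison is needed.
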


\section{Reconstruction Codes for $t\ge 2$ Deletions}
\label{sec:2-reconstruct}

In this section, we demonstrate the following result.

\begin{theorem}\label{thm:2-reconstruction}
	Let $\bx$ and $\by$ be binary words of length $n\ge 6$ and $t\ge 2$.
	If $|\D_1(\bx)\cap\D_1(\by)|=1$, then we have that 
	\begin{align}
	|\D_t(\bx)\cap\D_t(\by)| 
	& 	\le D_{t-1}(n-1)+\nu_{t-1}(n-3) \label{eq:coverage2} \\
	& =n^{t-1}+O(n^{t-2}) \text{ for fixed values of } t. \notag
	\end{align}
	Furthermore, when $t< n/2$, the inequality is strict.
\end{theorem}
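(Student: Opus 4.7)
The plan is to first derive, in parallel to Lemma~\ref{char:type-A}, a structural characterisation of pairs $(\bx,\by)$ of length $n$ with $|\D_1(\bx)\cap\D_1(\by)|=1$. By the classical insertion-ball correspondence, such $(\bx,\by)$ are precisely pairs of distinct single-bit insertions into a common $(n-1)$-string $\bz$, and the requirement that the intersection has size exactly $1$ then says, via the contrapositive of Lemma~\ref{char:type-A}, that these insertions do not induce an alternating differing block. This classification yields a small family of canonical shapes: Hamming-distance-one pairs $\bx=\bu a\bv$, $\by=\bu\bar{a}\bv$ (never Type-A since Type-A requires $|\ba|\ge 2$), shifted-insertion pairs $\bx=\bu c\bar{c}^{k}\bv$, $\by=\bu\bar{c}^{k}c\bv$ with $k\ge 2$ (the $k=1$ case being exactly Type-A and so excluded), and a handful of symmetric variants handled by analogous reductions.

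With $\bz$ identified, I would partition the target intersection as $\D_t(\bx)\cap\D_t(\by)=A\sqcup B$ where $A\triangleq\D_{t-1}(\bz)$, which sits automatically inside the intersection since $\bz\in\D_1(\bx)\cap\D_1(\by)$, and $B$ is the complement. The bound $|A|\le D_{t-1}(n-1)$ is immediate from~\eqref{eq:dtn1}. For the Hamming-distance-one shape, a rerouting argument will show $B=\varnothing$: any $\bw\in B$ would have to retain both the extra $a$ from $\bx$ and the extra $\bar{a}$ from $\by$ at some positions $j_x$ and $j_y$ of $\bw$; the case $j_x=j_y$ is immediately contradictory since then $a=\bar{a}$, and $j_x\neq j_y$ forces $a$ to appear inside $\bu$ and $\bar{a}$ inside $\bv$ (or vice versa), which permits re-embedding $\bw$ as a subsequence of $\bz=\bu\bv$ and contradicts $\bw\notin\D_{t-1}(\bz)$. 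Hence this sub-case already satisfies the strictly stronger bound $|\D_t(\bx)\cap\D_t(\by)|\le D_{t-1}(n-1)$.

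The main work is the shifted-insertion shape. Every $\bw\in B$ must retain both extra copies of $c$, which in turn forces all $k$ copies of $\bar{c}$ in the middle region of each word to be deleted, so $\bw=\bu^{\ast}c\bv^{\ast}$ with $\bu^{\ast}$ a subsequence of $\bu$ and $\bv^{\ast}$ a subsequence of $\bv$, and the $t$ deletions split as $k$ in the middle plus $t-k$ in the flanks. I plan to trim one additional $\bar{c}$ on each side of the retained $c$, using $k\ge 2$ essentially, to produce distinct length-$(n-3)$ reductions $\tilde{\bx}$ and $\tilde{\by}$ such that $\bw$ with its retained $c$ and one flanking character excised is a common $(t-1)$-deletion of $\tilde{\bx}$ and $\tilde{\by}$; Levenshtein's bound~\eqref{eq:read-wholespace} then yields $|B|\le\nu_{t-1}(n-3)$. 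In the degenerate sub-shape where $\bu$ and $\bv$ are themselves $\bar{c}$-runs, so $\bz$ is an all-$\bar{c}$ word of length $n-1$ and the trimming collapses $\tilde{\bx}=\tilde{\by}$, the closed-form of $|\D_{t-1}(\bz)|$ falls well below $D_{t-1}(n-1)$; a direct enumeration of the words $\bar{c}^{a'}c\bar{c}^{b'}$ available in $B$ then shows that the slack in $|A|$ exactly absorbs the excess of $|B|$ over $\nu_{t-1}(n-3)$, preserving the summed bound.

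Finally, for the strict inequality when $t<n/2$, the observation is that equality $|A|=D_{t-1}(n-1)$ would force $\bz$ to be alternating (via the closed-form expression of deletion-ball size in terms of the run count of $\bz$), which is incompatible both with the $\bar{c}^{k}$ run in $\bz$ present in the shifted shape and with the availability of $B\ne\varnothing$ needed to approach the bound; meanwhile $\nu_{t-1}(n-3)=2D_{t-2}(n-5)$ is strictly positive exactly when $t\le n-3$, which is implied by $t<n/2$ together with $n\ge 6$. The main obstacle I foresee is the coordinated trimming argument in Step~3 and the bookkeeping for the degenerate sub-shape: I must choose the two flanking $\bar{c}$'s to remove from $\bx$ and from $\by$ so that $\tilde{\bx}\ne\tilde{\by}$ in the non-degenerate regime and the reduction $\bw\mapsto \bw'$ is injective, while in the degenerate regime the slack $D_{t-1}(n-1)-|A|$ must be quantified precisely enough to cover whatever excess $|B|$ carries over $\nu_{t-1}(n-3)$.
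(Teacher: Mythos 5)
Your Step~1 classification is where the proof breaks. It is true that a pair with $|\D_1(\bx)\cap\D_1(\by)|=1$ consists of two distinct single insertions into the common word $\bz$, and that Type-A-confusability must be excluded; but the resulting canonical shapes are \emph{not} limited to Hamming-distance-one pairs and constant-run shifts $(\bu c\overline{c}^{k}\bv,\ \bu\overline{c}^{k}c\bv)$. In general the two insertions may extend two different runs of $\bz$ separated by an \emph{arbitrary} stretch, and the two inserted symbols may have \emph{different values}: the correct general form (the paper's Type-B-confusability, Lemma~\ref{lemma:lev1form}) is $\bx=\bu a\overline{a}\bv b\bw$, $\by=\bu\overline{a}\bv b\overline{b}\bw$ with $\bv$ an arbitrary word. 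A minimal witness is $\bx=1001$, $\by=0010$ (here $\bz=001$): one checks $\D_1(\bx)\cap\D_1(\by)=\{001\}$, yet the pair is not of the form $(\bu c\overline{c}^{k}\bv,\bu\overline{c}^{k}c\bv)$ for any $k$, nor a reversal, complementation, or swap thereof, since the inserted symbols are $1$ and $0$ and the intervening segment $00$ is not adjacent to matching flanks. Because your Step~3 argument hinges entirely on the constant-run structure (``retaining both extra copies of $c$ forces all $k$ middle $\overline{c}$'s to be deleted''), it has no analogue for the general Type-B shape: when the middle $\bv$ is arbitrary there is no run to collapse, the set $B=(\D_t(\bx)\cap\D_t(\by))\setminus\D_{t-1}(\bz)$ cannot be produced by a single trimming, and this is precisely where the bulk of the difficulty lives. (A secondary caveat: even in your restricted shape, ``which positions of $\bw$ were retained'' is not well defined for subsequences, so the claimed forced deletions need a careful embedding argument.)

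The parts of your plan that do survive are the Hamming-distance-one case (the paper's Lemma~\ref{lemma:hamming1forgeneral} proves the stronger statement $\D_t(\bx)\cap\D_t(\by)=\D_{t-1}(\bz)$ by induction), the identification $\D_{t-1}(\bz)\subseteq\D_t(\bx)\cap\D_t(\by)$ with $|\D_{t-1}(\bz)|\le D_{t-1}(n-1)$, and the mechanism for strictness, namely that $\bz$ cannot be alternating (Lemma~\ref{lem:z-not-alternating} together with Lemma~\ref{lemma:dtn}(v)) --- though the non-alternating claim must be established for the general Type-B $\bz$, not just for a $\bz$ containing a $\overline{c}^{k}$ run. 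To close the gap you would need either to prove the full Type-B characterization and then handle an arbitrary middle $\bv$ in your excess set $B$, or to follow the paper's route: first bound the case $\bu=\bw=\varnothing$ by partitioning the intersection $\cS$ according to the first and last symbols of the noisy read ($\cS^{\overline{a}}$, $\cS_b$, and a small corner set bounded by $\nu_{t-1}(n-3)$, as in Lemma~\ref{lemma:firsttwoforms}), and then run an induction on $n$ that peels one symbol at a time off $\bu$ and $\bw$, splitting $\cS$ into $\cS^{c}$ and $\cS^{\overline{c}}$ and invoking the recursions of Lemma~\ref{lemma:dtn}(i).
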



Before we provide the detailed proof of Theorem~\ref{thm:2-reconstruction}, we look at its implication.
Suppose that we have an $(n,2;\D_1)$-reconstruction code $\C$. 
Then the intersection size of the single-deletion balls of any two codewords in $\C$ is at most one.
Applying Theorem~\ref{thm:2-reconstruction}, we have that the read coverage $\nu(\C;\D_t)$  is at most $N_t^{(2)}(n)$ where $N_t^{(2)}(n) = D_{t-1}(n-1)+\nu_{t-1}(n-3)$.
Hence, $\C$ is an $(n,N_t^{(2)}(n)+1;\D_t)$-reconstruction code.
We also observe that $N_2(n)\sim \nu_t(n)/2$, or, 
$\lim_{n\to \infty} N_t^{(2)}(n)/\nu_t(n) =1/2$.
Therefore, by sacrificing $\log_2\log_2 n+O(1)$ bits of information, the codes in Theorem~\ref{thm:code-t1-N2} are able to uniquely reconstruct codewords with half the number of noisy reads (as compared to no coding).
Note also that by Theorem~\ref{thm:1-reconstruction}, if the number of redundancy is roughly $\log_2 n$, then the number of noisy reads has to be $2n^{t-2}+O(n^{t-3})$.
We summarize our discussion with the following theorem.

\begin{theorem}
	Let $n\ge 6$ and $t\ge 2$.
	Set $N_t^{(2)}(n)=D_{t-1}(n-1)+\nu_{t-1}(n-3)$.
	If $\C$ is an $(n,2;\D_1)$-reconstruction code,
	then $\C$ is also an $\left(n,N_t^{(2)}(n)+1;\D_t\right)$-reconstruction code.
	Furthermore, this implies that $\rho\left(n,N_t^{(2)}(n)+1;\D_t\right)\le  \log_2\log_2 n+O(1)$.
\end{theorem}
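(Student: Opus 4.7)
The plan is to decompose $\D_t(\bx)\cap\D_t(\by)$ into a piece counted by $D_{t-1}(n-1)$ and a residual piece counted by $\nu_{t-1}(n-3)$, using the unique common single-deletion descendant as a pivot. Let $\bw$ be the unique word in $\D_1(\bx)\cap\D_1(\by)$, and let $i,j$ be positions with $\bw$ obtained by deleting the $i$th coordinate of $\bx$ and the $j$th coordinate of $\by$. Any $(t-1)$-deletion descendant of $\bw$ is automatically obtainable from both $\bx$ and $\by$ by $t$ deletions (first delete position $i$ or $j$, then $t-1$ more), so $\D_{t-1}(\bw)\subseteq \D_t(\bx)\cap\D_t(\by)$, contributing at most $D_{t-1}(n-1)$ elements.

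Set $B=(\D_t(\bx)\cap\D_t(\by))\setminus\D_{t-1}(\bw)$. The key observation is that $\bz\in B$ iff every length-$(n-t)$ embedding of $\bz$ into $\bx$ uses position $i$, and every embedding into $\by$ uses position $j$; otherwise $\bz$ would be a subsequence of $\bw$ and already lie in $\D_{t-1}(\bw)$. To translate this forced-position condition into a $\nu_{t-1}(n-3)$ bound, I will first establish the announced analogue of Lemma~\ref{char:type-A}: since $|\D_1(\bx)\cap\D_1(\by)|=1$, the pair $(\bx,\by)$ shares a common descendant but is \emph{not} Type-A-confusable, and one can show that $\bx=\bu\alpha\bv$ and $\by=\bu\beta\bv$, where $\bu,\bv$ are the maximal common prefix and suffix and the disagreement blocks $\alpha,\beta$ together with the positions $i,j$ fit inside a common three-position window. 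Deleting that window from both $\bx$ and $\by$ yields words $\widetilde\bx,\widetilde\by\in\{0,1\}^{n-3}$ that are distinct (if they coincided, a second common descendant of $\bx,\by$ could be reconstructed, contradicting uniqueness of $\bw$).

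Given the forced-position characterization, each $\bz\in B$ factors as $\bz_L\,c\,\bz_R$ where $c=x_i=y_j$ corresponds to the forced window character; stripping $c$ and the contributions absorbed by the fixed prefix/suffix realises $\bz_L\bz_R$ as a length-$(n-t-1)=((n-3)-(t-1))$ common subsequence of $\widetilde\bx$ and $\widetilde\by$. This gives an injection $B\hookrightarrow \D_{t-1}(\widetilde\bx)\cap\D_{t-1}(\widetilde\by)$, so $|B|\le\nu_{t-1}(n-3)$, which combined with the $\D_{t-1}(\bw)$ bound yields \eqref{eq:coverage2}; the asymptotic form $n^{t-1}+O(n^{t-2})$ follows from \eqref{eq:dtn1} and \eqref{eq:read-wholespace}. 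For the strict inequality when $t<n/2$, I will argue that saturation of $\nu_{t-1}(n-3)$ requires $\widetilde\bx,\widetilde\by$ to be Type-A-confusable with a maximum-length alternating middle block; the geometry of the three-position window then forces at least one image of $B$ to collide with an element of $\D_{t-1}(\bw)$, and the hypothesis $t<n/2$ provides enough length in $\bw$ for such a collision to materialise.

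The main obstacle will be the structural characterization of intersection-$=$-one pairs and the accompanying case analysis: pinning down the precise form of the three-position disagreement window across the cases $i<j$, $i=j$, and $i>j$; verifying that the resulting $\widetilde\bx,\widetilde\by$ are genuinely distinct; and checking that the map $\bz\mapsto\bz_L\bz_R$ is injective and lands in the claimed intersection. Once that combinatorial skeleton is in place, the counting bound and the strict-inequality refinement should follow routinely from \eqref{eq:dtn1}--\eqref{eq:read-wholespace}.
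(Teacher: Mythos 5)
Your opening move --- splitting $\D_t(\bx)\cap\D_t(\by)$ into $\D_{t-1}(\bw)$ (where $\{\bw\}=\D_1(\bx)\cap\D_1(\by)$) plus a residual $B$ to be bounded by $\nu_{t-1}(n-3)$ --- has exactly the shape of the paper's bound, and your observation that $\bz\notin\D_{t-1}(\bw)$ forces every embedding of $\bz$ into $\bx$ to retain the deleted position $i$ is sound. The gap is in the structural lemma you plan to lean on. When $|\D_1(\bx)\cap\D_1(\by)|=1$ and the Hamming distance exceeds one, the correct characterization (Lemma~\ref{lemma:lev1form}) is $\bx=\bu a\overline{a}\bv b\bw$ and $\by=\bu\overline{a}\bv b\overline{b}\bw$: the two words disagree on a \emph{shift of the entire block} $\overline{a}\bv b$, which can occupy arbitrarily many coordinates, so the disagreement does \emph{not} fit inside a common three-position window; moreover the two deleted symbols are $x_i=a$ and $y_j=\overline{b}$, which need not be equal, so an element of $B$ need not factor as $\bz_L\,c\,\bz_R$ around a single common forced character. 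Consequently the recipe ``delete the window from both words to get distinct $\widetilde\bx,\widetilde\by\in\{0,1\}^{n-3}$ and inject $B$ into $\D_{t-1}(\widetilde\bx)\cap\D_{t-1}(\widetilde\by)$'' is not available as stated, and this is the heart of the argument, not a routine detail to be filled in later.

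What the paper actually does for the Type-B case is an induction on $n$ that peels off the common prefix $\bu$ and suffix $\bw$ one symbol at a time (partitioning the intersection by its first or last symbol), reducing to the base case $\bx=a\overline{a}\bv b$, $\by=\overline{a}\bv b\overline{b}$ of Lemma~\ref{lemma:firsttwoforms}. Even there, the residual is bounded by $\nu_{t-1}(n-3)$ only through a case split: in most subcases the bound is $D_{t-2}(n-4)\le 2D_{t-2}(n-5)=\nu_{t-1}(n-3)$, a purely numerical comparison rather than an injection into an intersection of two $(t-1)$-deletion balls; only in one subcase does the residual equal $|\D_{t-1}(\overline{a}a\bv^*)\cap\D_{t-1}(\bv^*\overline{a}a)|$ for two length-$(n-3)$ words, whose distinctness must then be checked against Type-A-confusability (Lemma~\ref{char:type-A}). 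You would also need to dispose of the Hamming-distance-one case (where the intersection is exactly $\D_{t-1}(\bw)$ and $B=\varnothing$), the case of disjoint single-deletion balls, which a $(n,2;\D_1)$-reconstruction code permits, and the strictness claim, which the paper derives from the fact that $\bw$ cannot be alternating rather than from a collision argument. So your skeleton is right, but the load-bearing combinatorial step is missing and the structural picture meant to guide it is incorrect.
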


\begin{remark}
	When $t=2$, we have that $N_2^{(2)}(n)=n+1$. In Section~\ref{sec:t2}, we focus on this special case and show that constrained SVT codes in Definition~\ref{def:csvt} are able to uniquely reconstruct codewords with strictly less than $n+1$ reads. 
\end{remark}

As the proof of Theorem~\ref{thm:2-reconstruction} is fairly technical, we outline our proof strategy.
\begin{itemize}
	\item First, in Section~\ref{sec:characterization}, we provide a characterization lemma similar to Lemma~\ref{char:type-A}.
	Specifically, we describe the necessary conditions for a pair of words to have single-deletion balls intersecting at exactly one output word.
	\item Applying the characterization lemma, we consider pairs of words with certain properties. 
	In Section~\ref{sec:firsttwocases}, we analyse the intersection size of certain $t$-deletion balls under certain scenarios. 
	\item Finally, in Section~\ref{sec:completeproof}, we use an inductive argument to complete the proof. 
\end{itemize}

%
%
%
%

\subsection{Type-B-Confusability}\label{sec:characterization}

To characterize words whose single-deletion balls intersect at exactly one word, we introduce the following notion of confusability.

\begin{definition}\label{def:type-B-confusable}
	Two words $\bx$ and $\by$ are {\em Type-B-confusable} if 
	\[
	\bx=\bu a \overline{a} \bv b \bw \quad \text{and}\quad \by=\bu \overline{a} \bv b\overline{b}\bw,
	\] or vice versa,
	for some subwords $\bu$, $\bv$ and $\bw$, and $a,b \in \{0,1\}$.
\end{definition}

Next, we borrow certain notation from \cite{Gabrys.2018}.	
Let $\cS$ be a set of binary words and $a,b \in \{0,1\}$. 
We define $\cS^a$ to be the set of all words in $\cS$ that start with $a$ and $\cS_b$ to be the set of all words in $\cS$ that end with $b$. 
We also combine both notations and let $\cS_b^a$ be the set of all words in $\cS$ that start with $a$ and end with $b$. 
If $\bx$ is a word, we define $\cS \circ \bx$ (or $\bx \circ \cS$) to be set of all words obtained by appending (or prepending) $\bx$ to every word in $\cS$.

%

\begin{lemma}\label{lemma:lev1form}
	Let $\bx$ and $\by$ be two binary words. 
	If $|\D_1(\bx)\cap\D_1(\by)|=1$, then 
	either the Hamming distance of $\bx$ and $\by$ is one or $\bx$ and $\by$ are Type-B-confusable.
\end{lemma}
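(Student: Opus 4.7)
The plan is to analyze how the unique common deletion $\bz \in \D_1(\bx)\cap \D_1(\by)$ arises from deletions in both $\bx$ and $\by$. Note that $\bx$ and $\by$ must have the same length $n$ and we assume $\bx\neq\by$. For each word, the set of positions whose deletion yields $\bz$ forms a ``deletion run'': a contiguous interval of identical characters. Call these $R_\bx, R_\by \subseteq [n]$. The proof splits according to whether these sets meet.

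If $r \in R_\bx \cap R_\by$, then $\bx^{(r)} = \by^{(r)}$, which forces $x_k = y_k$ for every $k \neq r$, and since $\bx \neq \by$ this gives Hamming distance exactly one. Otherwise $R_\bx$ and $R_\by$ are disjoint intervals in $[n]$; one lies entirely to the left of the other, so after possibly swapping $\bx$ with $\by$ (permitted by the ``or vice versa'' clause in Definition~\ref{def:type-B-confusable}) we may assume $p^* := \max R_\bx < \min R_\by =: q^*$. Equating $\bx^{(p^*)}$ with $\by^{(q^*)}$ coordinatewise yields
\[x_k = y_k \text{ for } k<p^* \text{ or } k>q^*, \quad y_k = x_{k+1} \text{ for } p^* \leq k < q^*.\]
The extremality of $p^*$ inside its run gives $x_{p^*+1} \neq x_{p^*}$, so writing $a := x_{p^*}$ we get $x_{p^*+1}=\overline{a}$; symmetrically, $y_{q^*-1}\neq y_{q^*}$ and the shift relation forces $y_{q^*-1}=x_{q^*}$, so setting $b:=x_{q^*}$ gives $y_{q^*}=\overline{b}$.

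Finally, if $q^* = p^*+1$ then $\bx = \bu\, a\, \overline{a}\, \bw$ and $\by = \bu\, \overline{a}\, a\, \bw$ with $\bu=x_1\cdots x_{p^*-1}$ and $\bw=x_{q^*+1}\cdots x_n$; this configuration is Type-A-confusable, so Lemma~\ref{char:type-A} would give $|\D_1(\bx)\cap\D_1(\by)|=2$, contradicting the hypothesis. Hence $q^* \geq p^*+2$, and setting $\bv = x_{p^*+2}\cdots x_{q^*-1}$ the extracted data assemble into
\[\bx = \bu\, a\, \overline{a}\, \bv\, b\, \bw \quad \text{and} \quad \by = \bu\, \overline{a}\, \bv\, b\, \overline{b}\, \bw,\]
matching Definition~\ref{def:type-B-confusable} exactly. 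The main obstacle is the extremal choice of $(p^*,q^*)$ --- $p^*$ maximal in $R_\bx$ and $q^*$ minimal in $R_\by$ --- since it is precisely this choice that converts the raw shift pattern into the clean $a\overline{a}$-prefix, $b\overline{b}$-suffix form demanded by Type-B-confusability; verifying that the deletion sets $R_\bx, R_\by$ are indeed runs (a standard adjacent-deletion argument) and the remaining bookkeeping are then routine.
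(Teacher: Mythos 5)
Your proof is correct, and it is organized rather differently from the paper's. The paper first assumes Hamming distance at least two, writes $\bx=\bu a\bd b\bw$ and $\by=\bu \overline{a}\be\overline{b}\bw$ around the first and last positions of disagreement, disposes of the case $\bd=\varnothing$ by a weight/Type-A argument, and then locates the unique common subsequence $\bz$ by asserting that four of the six prefix/suffix deletion combinations give empty intersections; matching coordinates in the surviving combination yields the Type-B form. You instead work from the deletion side: the sets $R_\bx$ and $R_\by$ of positions whose deletion produces $\bz$ are maximal runs, and the dichotomy of whether these runs intersect cleanly separates the Hamming-distance-one case (runs meet, so deleting a common position forces agreement everywhere else) from the Type-B case (runs disjoint, with the extremal endpoints $p^*=\max R_\bx$ and $q^*=\min R_\by$ supplying the $a\overline{a}$-prefix and $b\overline{b}$-suffix, and Lemma~\ref{char:type-A} excluding the adjacent configuration $q^*=p^*+1$). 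Your route avoids the paper's somewhat underjustified claim about the four empty intersections, at the cost of having to verify that the deletion-position sets are maximal runs---which, as you note, is the standard adjacent-deletion argument and genuinely routine. Both proofs lean on Lemma~\ref{char:type-A} at the same point, namely to rule out the degenerate alternating (Type-A) configuration, and both implicitly assume $\bx\ne\by$, which you at least state explicitly.
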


\begin{proof}
	 Suppose that $\bx$ and $\by$ have Hamming distance at least two. 
	 Then $\bx$ and $\by$ must be of the form 
	 \[\bu a \bd b \bw \text{ and } \bu \overline{a} \be \overline{b} \bw\] 
	 for subwords $\bu$, $\bw$, $\bd$, and $\be$, where $|\bd|=|\be|$, and $a,b \in \{0,1\}$.
	
	Without loss of generality, suppose that $\bx=\bu a \bd b \bw$ and $\by=\bu \overline{a} \be \overline{b} \bw$. If $\bd$ is empty, then $\bx=\bu a b \bw$ and $\by=\bu \overline{ab} \bw$. If $a =b$, then their weight differ by two and hence $|{\cal D}_1(\bx) \cap {\cal D}_1(\by)|= 0$ which contradicts our assumption. Else, if $a \neq b$, then by definition we have that $\bx$ and $\by$ are \textit{Type-A-confusable}, and by Lemma \ref{char:type-A}, we have $|{\cal D}_1(\bx) \cap {\cal D}_1(\by)|= 2$  which also contradicts our assumption. Therefore $\bd$ is nonempty.
	
	Let $\{\bz\} = {\cal D}_1(\bx) \cap {\cal D}_1(\by)$. Note that the following intersection of 1-deletion balls are empty:
	\begin{align*}
	&{\cal D}_1(\bu a)\ \circ \bd b\bw \cap {\cal D}_1(\bu \overline{a}) \circ \be \overline{b} \bw = \varnothing \\
	&\bu a\ \circ {\cal D}_1(\bd b\bw) \cap \bu \overline{a} \circ {\cal D}_1(\be \overline{b} \bw) = \varnothing \\
	&{\cal D}_1(\bu a \bd) \circ b\bw \cap {\cal D}_1(\bu \overline{a} \be) \circ \overline{b} \bw=\varnothing,\\
	&\bu a \bd \circ {\cal D}_1(b\bw) \cap \bu \overline{a} \be \circ {\cal D}_1(\overline{b} \bw)=\varnothing
	\end{align*}Hence $\bz$ can only be in ${\cal D}_1(\bu a) \circ \bd b\bw \cap \bu \overline{a} \be \circ {\cal D}_1(\overline{b} \bw)$ or $\bu a \bd \circ {\cal D}_1(b\bw) \cap {\cal D}_1(\bu \overline{a}) \circ \be \overline{b} \bw$. Without loss of generality, we assume that $\bz \in {\cal D}_1(\bu a) \circ \bd b\bw \cap \bu \overline{a} \be \circ {\cal D}_1(\overline{b} \bw)$. 
	Matching positions implies that $\bu \in {\cal D}_1(\bu a), \bd b = \overline{a} \be$ and $\bw \in {\cal D}_1(\overline{b} \bw)$. Furthermore it implies that $\bz=\bu \bd b \bw$. Let $\bd = \overline{a} \bv$ and $\be=\br b$ for some subwords $\bv$ and $\br$. Since $\bd b = \overline{a} \be$, we have $\overline{a} \bv b = \overline{a} \br b$, and hence $\bv=\br$.
	
	Therefore we have shown that $\bx=\bu a \bd b \bw=\bu a \overline{a} \bv b \bw$ and $\by=\bu \overline{a} \be \overline{b} \bw=\bu \overline{a} \bv b \overline{b}\bw$.
\end{proof}

\subsection{Special Cases}\label{sec:firsttwocases}

Following Lemma~\ref{lemma:lev1form}, we study the intersection size of $t$-deletion balls for two special cases.
In the first case, we assume that the two words differ at exactly one coordinate.
In the second case, we assume that the words are Type-B-confusable with $\bu$ and $\bw$ being empty strings.

In our proofs, we appeal to the following technical results on deletion balls.

\begin{lemma}[\hspace{-0.3pt}\cite{Levenshtein.2001.jcta,Liron.2015, Gabrys.2018}]\label{lemma:dtn}
	Let $1\le t\le n$ and $a\in\{0,1\}$. Suppose that $\bu$, $\bv$, $\bx$, and $\by$ are binary words. 
	\begin{enumerate}[(i)]
		\item In addition to \eqref{eq:dtn1} and \eqref{eq:read-wholespace}, 
		we have that 
		\begin{align*}
		D_t(n) &= D_{t}(n-1)+D_{t-1}(n-2),\\
		\nu_t(n) &=\nu_{t}(n-1)+\nu_{t-1}(n-2).
		\end{align*}
		\item $D_t(n) \geq D_{t-i}(n-i)$ and 
		$\nu_t(n) \geq \nu_{t-i}(n-i)$ for $ i \leq t$.
		\item $\D_t(a\bx)^a  = a\circ\D_t(\bx)$ and $\D_t(\overline{a}\bx)^a = \D_{t-1}(\bx)^a$.
		\item $|D_t(\bx)^a|\le D_t(|\bx|-1)$.
		\item Suppose further that $t< n/2$. Then $|\D_t(\bx)|=D_t(n)$ if and only if $\bx$ is an alternating sequence.
	\end{enumerate}
\end{lemma}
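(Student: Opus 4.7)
The lemma is a compendium of standard facts about deletion balls, all stemming from two primitive ingredients: the closed form $D_t(n)=\sum_{i=0}^{t}\binom{n-t}{i}$ from \eqref{eq:dtn1}, and a ``first-symbol decomposition'' that controls how the leading bit of a word interacts with $t$ deletions. The plan is to establish (i) and (iii) directly, derive (ii) and (iv) as routine corollaries, and finally obtain the characterization (v) by induction using (i), (iii), and (iv) together. Throughout, when $\bx$ has first symbol $x_1$, I write $\bx'$ for the suffix of $\bx$ obtained by removing $x_1$.

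For (i), I would apply Pascal's identity term-by-term to $D_t(n)=\sum_{i=0}^{t}\binom{n-t}{i}$, splitting each $\binom{n-t}{i}$ as $\binom{n-t-1}{i}+\binom{n-t-1}{i-1}$ and regrouping the two resulting sums as $D_t(n-1)+D_{t-1}(n-2)$; the analogous recursion for $\nu_t$ follows immediately from $\nu_t(n)=2D_{t-1}(n-2)$ in \eqref{eq:read-wholespace} by doubling. For (iii) I would argue directly with subsequences. Any $\bz\in\D_t(a\bx)^a$ is a length $|\bx|+1-t$ subsequence of $a\bx$ beginning with $a$: if the subsequence uses position $0$ of $a\bx$, then $\bz$ visibly lies in $a\circ\D_t(\bx)$; otherwise the first position it uses within $\bx$ carries the symbol $a$, so writing $\bz=a\bz'$ with $\bz'$ a length $|\bx|-t$ subsequence of $\bx$ again places $\bz$ in $a\circ\D_t(\bx)$. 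The reverse containment is immediate. For the second identity, when $\bx$ begins with $\overline{a}$ no element of $\D_t(\overline{a}\bx)^a$ can retain position $0$, so the surviving subsequence lies entirely inside $\bx$ and is obtained by one fewer deletion, yielding $\D_t(\overline{a}\bx)^a=\D_{t-1}(\bx)^a$.

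The remaining parts now follow. Part (ii) is obtained by iterating (i): since $D_t(n)=D_t(n-1)+D_{t-1}(n-2)\ge D_{t-1}(n-2)$, this chains into $D_t(n)\ge D_{t-i}(n-i)$, and the argument for $\nu_t$ is identical. For (iv) I would induct on $|\bx|$: (iii) reduces $|\D_t(\bx)^a|$ either to $|\D_t(\bx')|\le D_t(|\bx|-1)$ when $x_1=a$, or to $|\D_{t-1}(\bx')^a|\le D_{t-1}(|\bx|-2)$ when $x_1=\overline{a}$, and (ii) absorbs the latter bound into $D_t(|\bx|-1)$. The main obstacle is (v). I would induct on $n$ using the decomposition $|\D_t(\bx)|=|\D_t(\bx')|+|\D_{t-1}(\bx')^{\overline{x_1}}|$, obtained by applying (iii) to $\D_t(\bx)^{x_1}$ and $\D_t(\bx)^{\overline{x_1}}$ separately; this sum is bounded above by $D_t(n-1)+D_{t-1}(n-2)=D_t(n)$ through (iv) and (i). Equality forces maximality of both summands: the induction hypothesis applied to the first summand yields that $\bx'$ is alternating, while squeezing the second summand by one further application of (iii) forces $\bx'$ to begin with $\overline{x_1}$, since otherwise the deletion parameter drops yet again and a strict-inequality count makes the bound unachievable. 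Together these two conclusions make $\bx$ itself alternating. The hypothesis $t<n/2$ is exactly what keeps each recursive branch inside the non-degenerate regime where the induction hypothesis remains available and the strict inequality in the off-alternating case is real.
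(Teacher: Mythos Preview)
Your treatment of (i), (iii), and (iv) is sound and essentially matches the paper (which cites (i)--(iii) from the literature and proves (iv) by peeling off the leading run of $\overline{a}$'s in one shot rather than one symbol at a time). Your derivation of (ii) by ``iterating (i)'' is not quite right as written: the inequality $D_t(n)\ge D_{t-1}(n-2)$ chains to $D_t(n)\ge D_{t-i}(n-2i)$, not $D_{t-i}(n-i)$. This is harmless, since (ii) is immediate from the closed form: $D_{t-i}(n-i)=\sum_{j=0}^{t-i}\binom{n-t}{j}$ is a truncation of the sum defining $D_t(n)$.

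The real gap is in your inductive proof of (v). You reduce $\bx$ of length $n$ to its suffix $\bx'$ of length $n-1$ and invoke the induction hypothesis to conclude that $|\D_t(\bx')|=D_t(n-1)$ forces $\bx'$ to be alternating. But that invocation needs $t<(n-1)/2$, which is strictly stronger than the standing hypothesis $t<n/2$. At the boundary $n=2t+1$ the reduction lands at $n-1=2t$, and there statement (v) is simply false: for instance $|\D_2(0110)|=4=D_2(4)$ even though $0110$ is not alternating. So your assertion that ``$t<n/2$ is exactly what keeps each recursive branch inside the non-degenerate regime where the induction hypothesis remains available'' is incorrect---the recursion escapes the hypothesis after a single step, and your second-summand argument cannot be iterated far enough to compensate (the needed strict inequality $D_{t-2}(m-2)<D_{t-1}(m-1)$ degenerates to equality once $m$ drops to $2t-1$).

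The paper takes a different and non-inductive route for (v): it quotes from \cite{Gabrys.2018} the bound $|\D_t(\bx)|\le D_t(n-2)+D_{t-1}(n-2)+D_{t-2}(n-4)$ valid for any non-alternating $\bx$, and then uses the recursion (i) to see that this is strictly below $D_t(n)$ exactly when $D_{t-2}(n-4)<D_{t-1}(n-3)$, i.e.\ when $\binom{n-t-2}{t-1}>0$, which is equivalent to $t<n/2$. If you want a self-contained argument along your lines, you will need a separate treatment of the base case $n=2t+1$.
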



\begin{proof}
	(i) and (ii) are from Levenshtein's work \cite{Levenshtein.2001.jcta}, while 
	(iii) is derived in \cite{Gabrys.2018}.
	
	We prove (iv) here. If $a$ does not appear in $\bx$, then the inequality is trivial. 
	Now suppose that $\bx={\overline{a}^m}a\bx^*,$ for some $m \geq 0$, and subword $\bx^*$. 
	Then we have $|D_t(\bx)^a|=|D_t({\overline{a}^m}a\bx^*)^a|=|D_{t-m}({\bx^*})|\leq D_{t-m}(|\bx^*|)=D_{t-m}(|\bx|-m-1) \leq D_{t}(|\bx|-1)$, where the last inequality follows from Lemma~\ref{lemma:dtn}(ii).
	
	Next, we prove (v). When $\bx$ is alternating, it is straightforward to verify that $|\D_t(\bx)|=D_t(n)$.
	To show the converse, we suppose that $\bx$ is not alternating. 
	Then \cite[Claim 6]{Gabrys.2018} states that $|\D_t(\bx)|\le D_t(n-2)+D_{t-1}(n-2)+D_{t-2}(n-4)$.
	Applying Lemma~\ref{lemma:dtn}(i), we have that $|\D_t(\bx)|<D_t(n)$ if $D_{t-2}(n-4) < D_{t-1}(n-3)$.
	Now, since the difference $D_{t-1}(n-3)-D_{t-2}(n-4)=\binom{n-t-2}{t-1}$, 
	we have a strict inequality when $n-t-2\ge t-1$, or, $t<n/2$.
\end{proof}

We proceed to study the first special case where $\bx$ and $\by$ have Hamming distance one. 

\begin{lemma}\label{lemma:hamming1forgeneral}
	Let $\bx$ and $\by$ be words with Hamming distance one.
	That is, $\bx=\bu 1 \bv$ and $\by=\bu 0 \bv$ for subwords $\bu$ and $\bv$.
	Then $\D_t(\bx) \cap \D_t(\by)=\D_{t-1}(\bu \bv) $ for any $t\geq 1$.
\end{lemma}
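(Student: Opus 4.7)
The plan is to prove the two inclusions separately. For $\D_{t-1}(\bu\bv) \subseteq \D_t(\bx)\cap\D_t(\by)$, I would argue directly: deleting the single discrepant symbol at position $i:=|\bu|+1$ from $\bx$ (respectively, from $\by$) yields the common word $\bu\bv$, so any $(t-1)$-deletion of $\bu\bv$ is automatically a $t$-deletion of both $\bx$ and $\by$.

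For the reverse inclusion, I would take $\bz\in \D_t(\bx)\cap \D_t(\by)$ and aim to exhibit, for at least one of $\bx$ or $\by$, an embedding of $\bz$ as a subsequence that avoids position $i$; excising that position reduces the ambient word to $\bu\bv$ and the surviving embedding then witnesses $\bz\in \D_{t-1}(\bu\bv)$. To produce such an embedding, I would work with the leftmost (greedy) embeddings $L_\bx$ and $L_\by$ of $\bz$ into $\bx$ and $\by$, and show that they cannot both pass through position $i$.

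Concretely, suppose $L_\bx$ matches some letter $z_k$ to position $i$; then $z_k=x_i=1$, and $L_\bx$ has matched $z_1,\ldots,z_{k-1}$ to positions strictly less than $i$, i.e.\ inside the common prefix $\bu$. Since $\bx$ and $\by$ agree on positions $1,\ldots,i-1$, a short induction on $k$ shows that the greedy rule forces $L_\by$ to match $z_1,\ldots,z_{k-1}$ to exactly those same positions. At step $k$, $L_\by$ then searches rightward for the leftmost $1$; since $y_i=0$ it must skip past position $i$ into $\bv$ (which, under the assumption $\bz\in\D_t(\by)$, must contain a $1$, as otherwise $z_k=1$ could not be matched into $\by$ at all). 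All subsequent greedy matches only advance rightward, so $L_\by$ avoids position $i$ entirely, giving the desired embedding.

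The only delicate point is the greedy-matching invariance: that the leftmost embeddings of a shared prefix of $\bz$ into a shared prefix of $\bx$ and $\by$ coincide. This is a clean combinatorial bookkeeping step and is really the heart of the argument; no size estimates from Lemma~\ref{lemma:dtn} are needed, since the two sides here are compared set-theoretically rather than by cardinality.
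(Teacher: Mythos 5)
Your proposal is correct, but it takes a genuinely different route from the paper. The paper proves the identity by induction on the word length: it first settles $t=1$ by a direct case analysis on where the deletion occurs, and then, for $t\ge 2$ and nonempty $\bu=a\bu^*$, partitions $\cS=\D_t(\bx)\cap\D_t(\by)$ into $\cS^a$ and $\cS^{\overline{a}}$ and applies the prefix-decomposition identities $\D_t(a\bw)^a=a\circ\D_t(\bw)$ and $\D_t(\overline{a}\bw)^a=\D_{t-1}(\bw)^a$ (Lemma~\ref{lemma:dtn}(iii)) together with the induction hypothesis, handling a nonempty suffix $\bv$ symmetrically and reducing to the base case where $\bu$ and $\bv$ are both empty. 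You instead prove the two inclusions directly, with the nontrivial direction resting on the observation that the leftmost embeddings of $\bz$ into $\bx$ and $\by$ cannot both use the discrepant position $i$: since $\bx$ and $\by$ agree to the left of $i$, the greedy matches of the prefix $z_1,\ldots,z_{k-1}$ coincide, and $y_i=0\ne 1=z_k$ forces the $\by$-embedding to jump past $i$; the standard optimality of leftmost matching guarantees this embedding completes. Your argument is self-contained (no reliance on Lemma~\ref{lemma:dtn}) and dispatches all $t\ge 1$ in one sweep without induction on $n$, whereas the paper's inductive first-symbol decomposition has the advantage of reusing exactly the machinery deployed in Lemmas~\ref{lemma:firsttwoforms} and the proof of Theorem~\ref{thm:2-reconstruction}, where a purely set-theoretic identity is no longer available and only cardinality bounds survive. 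The one step you defer --- the greedy-matching invariance on a shared prefix --- is indeed routine and your sketch of it (induction on $k$) is sound.
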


\begin{proof}
	We first show the result for $t=1$. 
	i.e. ${\cal D}_1(\bx) \cap {\D}_1(\by)={\D}_{0}(\bu \bv) =\{\bu \bv\}$. 
	Note that $\bu\bv \in {\cal D}_1(\bx) \cap {\cal D}_1(\by)$. Suppose there exists $\bz \in {\cal D}_1(\bx) \cap {\cal D}_1(\by)$ where $\bz \neq \bu\bv$. Then we must have $\bz \in {\cal D}_1(\bu)\circ 1\bv \cap \bu 0\circ {\cal D}_1(\bv)$ or $\bz \in \bu 1\circ{\cal D}_1(\bv) \cap {\cal D}_1(\bu)\circ 0\bv$. Without loss of generality, suppose that $\bz \in {\cal D}_1(\bu)\circ 1\bv \cap \bu 0\circ {\cal D}_1(\bv)$. By matching positions, we must have $\bu = \bu^* 1$ and $\bv=0\bv^*$ for some subwords $\bu^*$ and $\bv^*$. Furthermore, we must have $\bz= \bu^* 10\bv^*=\bu\bv$, which contradicts our assumption that $\bz \neq \bu \bv$. 
	Hence, the result holds for $t=1$.

	For $t \geq 2$, we prove by induction on $n$. The base case is when $n=1$, i.e. $\bx=0$ and $\by=1$, which is when $\bu$ and $\bv$ are empty strings. In this case the statement is trivial.
	
	Suppose that for any pair of binary words $\bx'=\bu'1\bv'$ and $\by'=\bu'0\bv'$ of length $n \leq k-1$, we have ${\cal D}_t(\bx') \cap {\cal D}_t(\by')= {\cal D}_{t-1}(\bu'\bv')$ for any $t \geq 2$. Let $\bx=\bu1\bv$ and $\by=\bu0\bv$ be binary words of length $k$. Let $t \geq 2$ and $\cS={\cal D}_t(\bx) \cap {\cal D}_t(\by)$. Now, we want to consider several cases for the prefix $\bu$. Suppose $\bu$ is a nonempty binary word, and suppose further that $\bu=a\bu^*$, for $a\in \{0,1\}$. Then consider the following disjoint subsets
	\begin{align*}
	\cS^{\overline{a}} &= {\cal D}_t(\bx)^{\overline{a}} \cap {\cal D}_t(\by)^{\overline{a}}={\cal D}_t(a\bu^*0\bv)^{\overline{a}} \cap {\cal D}_t(a\bu^*1\bv)^{\overline{a}} \\
	&={\cal D}_{t-1}(\bu^*0\bv)^{\overline{a}} \cap {\cal D}_{t-1}(\bu^*1\bv)^{\overline{a}}={\cal D}_{t-2}(\bu^* \bv)^{\overline{a}}\\
	&\subset {\cal D}_{t-1}(\bu\bv),
	\end{align*}
	where the last equality follows from our induction hypothesis if $t\geq 3$ or from our first result if $t=2$,
	\begin{align*}
	\cS^a &= {\cal D}_t(\bx)^a \cap {\cal D}_t(\by)^a={\cal D}_t(a\bu^*0\bv)^a \cap {\cal D}_t(a\bu^*1\bv)^a \\
	&=a \circ \left( {\cal D}_t(\bu^*0\bv) \cap {\cal D}_t(\bu^*1\bv) \right) = a \circ {\cal D}_{t-1}(\bu^* \bv) \\
	&\subset {\cal D}_{t-1}(\bu \bv),
	\end{align*}
	where the last equality holds because of our induction hypothesis. Therefore we have $\cS=\cS^a \cup \cS^{\overline{a}} \subset {\cal D}_{t-1}(\bu \bv) $. Furthermore it is clear that ${\cal D}_{t-1}(\bu \bv) \subset {\cal D}_t(\bu 1\bv) \cap {\cal D}_t(\bu 0 \bv)=\cS$. Hence we have $\cS={\cal D}_{t-1}(\bu \bv)$.
	
	Suppose $\bv$ is a nonempty binary subword, i.e. $\bv=\bv^*a$, for $a\in \{0,1\}$. Then similarly to the above, by considering $\cS_a$ and $\cS_{\overline{a}}$, we can also show that $\cS={\cal D}_{t-1}(\bu \bv)$.
	
	Therefore we are left with the case when $\bu$ and $\bv$ are both empty subwords, which is already shown as the base case.
\end{proof}

Next, we consider the case where the words are Type-B-confusable with the subwords $\bu$ and $\bw$ being empty. 
%
%
%
%

\begin{lemma}\label{lemma:firsttwoforms}
	Let $\bx$ and $\by$ be binary words of the form
	\[
	\bx= a \overline{a} \bv b  \quad \text{and}\quad \by= \overline{a} \bv b\overline{b},
	\] or vice versa,
	for some subword $\bv$ of length $n-3$ and $a,b \in \{0,1\}$.
	If $\D_1(\bx)\cap\D_1(\by)=\{\bz\}$, 
	then $|D_t(\bx) \cap D_t(\by)| \leq |D_{t-1}(\bz)| +\nu_{t-1}(n-3)$.
\end{lemma}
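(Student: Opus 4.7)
\noindent\emph{Proof plan.} Set $\cS \coloneqq \D_t(\bx)\cap\D_t(\by)$ and observe that $\bz = \bar a\bv b$, since this word arises both from deleting the leading $a$ of $\bx$ and from deleting the trailing $\bar b$ of $\by$. The plan is to partition $\cS$ according to the first and last letters of its elements, writing $\cS = (\cS^{\bar a}\cup\cS_b)\cup\cS^a_{\bar b}$, and to bound the two pieces by $|\D_{t-1}(\bz)|$ and $\nu_{t-1}(n-3)$ respectively.

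For the first piece, I would apply the prefix rule of Lemma~\ref{lemma:dtn}(iii) to $\bx = a\cdot(\bar a\bv b)$ to get $\D_t(\bx)^{\bar a} = \D_{t-1}(\bar a\bv b)^{\bar a}\subseteq\D_{t-1}(\bz)$, and the symmetric suffix rule to $\by = (\bar a\bv b)\cdot\bar b$ to get $\D_t(\by)_b = \D_{t-1}(\bar a\bv b)_b\subseteq\D_{t-1}(\bz)$. Together these yield $\cS^{\bar a}\cup\cS_b\subseteq\D_{t-1}(\bz)$.

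Bounding $\cS^a_{\bar b}$ is the core of the argument. Writing $\bw = a\bw'\bar b$ with $|\bw'|=n-t-2$, one more application of Lemma~\ref{lemma:dtn}(iii) to each of $\bx$ and $\by$ produces the two simultaneous constraints $\bw'\bar b\in\D_{t-1}(\bar a\bv)$ and $a\bw'\in\D_{t-1}(\bv b)$. I would then do a case analysis on the first letter $v_1$ and the last letter $v_{n-3}$ of $\bv$, peeling off one further layer with rule (iii). In the three cases where $v_1=\bar a$ or $v_{n-3}=b$, at least one of the two constraints reduces to a $(t{-}2)$-deletion statement inside a length-$(n-3)$ word, so Lemma~\ref{lemma:dtn}(iv) gives $|\cS^a_{\bar b}|\le D_{t-2}(n-4)$; a short computation from Lemma~\ref{lemma:dtn}(i) shows $D_{t-2}(n-4) = D_{t-2}(n-5)+D_{t-3}(n-6)\le 2D_{t-2}(n-5) = \nu_{t-1}(n-3)$, which is exactly the desired bound.

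The main obstacle is the remaining case $v_1 = a$ and $v_{n-3} = \bar b$, where both constraints genuinely cut out $(t{-}1)$-deletion balls of length-$(n-3)$ words, yielding $\bw'\in\D_{t-1}(v_2\cdots v_{n-3}b)\cap\D_{t-1}(\bar a v_1\cdots v_{n-4})$. If the two anchor words are distinct, then $|\cS^a_{\bar b}|\le\nu_{t-1}(n-3)$ is immediate from the definition of $\nu_{t-1}(n-3)$. The delicate step is to rule out their coincidence: matching coordinates forces $v_2=\bar a$, $v_{i+2}=v_i$ for all admissible $i$, and $v_{n-4}=b$, which together with $v_1=a$ and $v_{n-3}=\bar b$ makes $\bv$ alternating starting with $a$ and forces $b$ to continue the alternation. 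Consequently $\bx$ becomes alternating and $\by = \overline{\bx}$, so $\bx$ and $\by$ are Type-A-confusable; Lemma~\ref{char:type-A} would then give $|\D_1(\bx)\cap\D_1(\by)|=2$, contradicting the hypothesis. Thus the coincidence never occurs, and combining the two parts yields $|\cS|\le|\D_{t-1}(\bz)|+\nu_{t-1}(n-3)$.
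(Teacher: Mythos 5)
Your proposal is correct and follows essentially the same route as the paper's proof: the same cover $\cS = (\cS^{\bar a}\cup\cS_b)\cup\cS^a_{\bar b}$ with the first two pieces absorbed into $\D_{t-1}(\bz)$ via Lemma~\ref{lemma:dtn}(iii), the same case analysis on the end letters of $\bv$ giving $D_{t-2}(n-4)\le\nu_{t-1}(n-3)$ in the easy cases, and the same Type-A-confusability contradiction ruling out coincidence of the two anchor words in the remaining case. The only cosmetic difference is that you unify the paper's two cases $b=a$ and $b=\bar a$ into one parametrized argument.
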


\begin{proof}
Let $\cS=D_t(\bx) \cap D_t(\by)$. We split this into two cases. \begin{enumerate}[(i)]
	\item If $b=a$, and hence $\bx=a\overline{a}\bv a$ and $\by=\overline{a}\bv a\overline{a}$.\\
	Note that $\bz= \overline{a}\bv a$. We consider the following three subsets:
	\begin{align*}
	\cS^{\overline{a}} &\subset {\cal D}_t(\bx)^{\overline{a}} \subset {\cal D}_{t-1}(\overline{a}\bv a) = {\cal D}_{t-1}(\bz) ,\\
	\cS_a &\subset {\cal D}_t(\by)_a \subset {\cal D}_{t-1}(\overline{a}\bv a) = {\cal D}_{t-1}(\bz),\\
	\cS_{\overline{a}}^a&={\cal D}_t(a\overline{a}\bv a)_{\overline{a}}^a \cap {\cal D}_t(\overline{a}\bv a\overline{a})_{\overline{a}}^a\\
	&=a \circ {\cal D}_{t-1}(\overline{a}\bv)_{\overline{a}} \cap {\cal D}_{t-1}(\bv a)^a \circ \overline{a}.
	\end{align*}
	Observe that if $\bv=\bv^*a$, for some subword $\bv^*$ then 
	\begin{align*}
	|\cS_{\overline{a}}^a| &\leq |{\cal D}_{t-1}(\overline{a}\bv)_{\overline{a}}|=|{\cal D}_{t-1}(\overline{a}\bv^*a)_{\overline{a}}|\\
	&=|{\cal D}_{t-2}(\overline{a}\bv^*)_{\overline{a}}|\leq D_{t-2}(|\overline{a}\bv^*|-1) \leq D_{t-2}(n-4).
	\end{align*}
	Similarly if $\bv=\overline{a}\bv^*$ for some subword $\bv^*$ then
	\begin{align*}
	|\cS_{\overline{a}}^a| &\leq |{\cal D}_{t-1}(\bv a)^a|=|{\cal D}_{t-1}(\overline{a}\bv^*a)^a|\\
	&=|{\cal D}_{t-2}(\bv^*a)^a|\leq D_{t-2}(|\bv^*a|-1)\leq D_{t-2}(n-4).
	\end{align*}
	In both cases, it follows from Lemma \ref{lemma:dtn} that $|\cS^a_{\overline{a}}|\leq D_{t-2}(n-4)=D_{t-2}(n-5)+D_{t-3}(n-6)\leq 2D_{t-2}(n-5)=\nu_{t-1}(n-3)$.  Hence $|\cS|=|\cS^{\overline{a}} \cup \cS_a|+|\cS_{\overline{a}}^a|\leq |{\cal D}_{t-1}(\bz)|+\nu_{t-1}(n-3) $.
	
	It remains to show for the case when $\bv$ is empty or $\bv=a\bv^*\overline{a}$. The former case would imply that $\bx$ and $\by$ are Type-A-confusable, which contradicts our assumption. While in the latter case, we have $|\cS_{\overline{a}}^a|=|{\cal D}_{t-1}(\overline{a}a\bv^*)\cap {\cal D}_{t-1}(\bv^*\overline{a}a)|$. It can be shown that $\overline{a}a\bv^*$ is equal to $\bv^*\overline{a}a$ if and only if $\bv^*=(\overline{a}a)^m$ for $m \geq 0$, in which case $\bx=a\overline{a}a(\overline{a}a)^m\overline{a}a$ and $\by=\overline{a}a(\overline{a}a)^m\overline{a}a\overline{a}$ would be Type-A-confusable and by Lemma \ref{char:type-A} contradicts our assumption. Therefore we know that $\overline{a}a\bv^*$ and $\bv^*\overline{a}a$ are distinct binary words, and thus $|{\cal D}_{t-1}(\overline{a}a\bv^*)\cap {\cal D}_{t-1}(\bv^*\overline{a}a)|\leq \nu_{t-1}(n-3)$. Therefore in this case also, $|\cS|=|\cS^{\overline{a}} \cup \cS_a|+|\cS_{\overline{a}}^a|\leq |{\cal D}_{t-1}(\bz)|+\nu_{t-1}(n-3)$.\\
	
	\item If $b=\overline{a}$, and hence $\bx=a\overline{a}\bv \overline{a}$ and $\by=\overline{a}\bv \overline{a}a$.\\
	Note that $\bz= {\overline{a}}\bv{\overline{a}}$. We consider the following three subsets:
	\begin{align*}
	\cS^{\overline{a}} &\subset {\cal D}_t(\bx)^{\overline{a}} \subset {\cal D}_{t-1}({\overline{a}}\bv{\overline{a}}) = {\cal D}_{t-1}(\bz) ,\\
	\cS_{\overline{a}} &\subset {\cal D}_t(\by)_{\overline{a}} \subset {\cal D}_{t-1}({\overline{a}}\bv{\overline{a}}) = {\cal D}_{t-1}(\bz),\\
	\cS_a^a&={\cal D}_t(a{\overline{a}}\bv {\overline{a}})_a^a \cap {\cal D}_t({\overline{a}}\bv{\overline{a}}a)_a^a\\
	&=a \circ {\cal D}_{t-1}({\overline{a}}\bv)_a \cap {\cal D}_{t-1}(\bv{\overline{a}})^a \circ a.
	\end{align*}
	Observe that if $\bv=\bv^*{\overline{a}}$, for some subword $\bv^*$ then 
	\begin{align*}
	|\cS_a^a|&\leq|{\cal D}_{t-1}({\overline{a}}\bv)_a|=|{\cal D}_{t-1}({\overline{a}}\bv^*{\overline{a}})_a|\\
	&=|{\cal D}_{t-2}({\overline{a}}\bv^*)_a|\leq D_{t-2}(|\overline{a}\bv^*|-1)=D_{t-2}(n-4).
	\end{align*}Similarly if $\bv={\overline{a}}\bv^*$ for some subword $\bv^*$ then  \begin{align*}
	|\cS_a^a|&\leq|{\cal D}_{t-1}(\bv{\overline{a}})^a|=|{\cal D}_{t-1}({\overline{a}}\bv^*{\overline{a}})^a|\\
	&=|{\cal D}_{t-2}(\bv^*{\overline{a}})^a|\leq D_{t-2}(|\bv^*{\overline{a}}|-1)=D_{t-2}(n-4).
	\end{align*}In both cases, exactly as the previous case it follows from Lemma \ref{lemma:dtn} that $|\cS^a_{\overline{a}}|\leq D_{t-2}(n-4)\leq \nu_{t-1}(n-3)$, and hence $|\cS|=|\cS^{\overline{a}} \cup \cS_{\overline{a}}|+|\cS_a^a|\leq |{\cal D}_{t-1}(\bz)|+\nu_{t-1}(n-3) $.
	
	If $\bv$ is empty, then the statement is trivial. If $\bv=a$, then $\bx$ and $\by$ would be Type-A-confusable, which contradicts our assumption. It remains to show for the case $\bv=a\bv^*a$. In which case, $|\cS_a^a|=|{\cal D}_{t-1}({\overline{a}}a\bv^*)\cap {\cal D}_{t-1}(\bv^*a{\overline{a}})|$. It can be shown that ${\overline{a}}a\bv^*$ is equal to $\bv^*a{\overline{a}}$ if and only if $\bv^*={\overline{a}}(a{\overline{a}})^m$ for $m \geq 0$, in which case $\bx=a{\overline{a}}a{\overline{a}}(a{\overline{a}})^ma{\overline{a}}$ and $\by={\overline{a}}a{\overline{a}}(a{\overline{a}})^ma{\overline{a}}a$ would be Type-A-confusable, and by Lemma \ref{char:type-A} contradicts our assumption. Therefore, we know that ${\overline{a}}a\bv^*$ and $\bv^*a{\overline{a}}$ are distinct binary words, and thus $|{\cal D}_{t-1}({\overline{a}}a\bv^*)\cap {\cal D}_{t-1}(\bv^*a{\overline{a}})|\leq \nu_{t-1}(n-3)$. Hence, $|\cS|=|\cS^{\overline{a}} \cup \cS_{\overline{a}}|+|\cS_{\overline{a}}^a|\leq |{\cal D}_{t-1}(\bz)|+\nu_{t-1}(n-3)$. \qedhere
\end{enumerate}
\end{proof}

\subsection{Proof of Theorem~\ref{thm:2-reconstruction}}\label{sec:completeproof}

We first consider the case $t=2$ and prove a stronger version of Theorem~\ref{thm:2-reconstruction}.

\begin{theorem}\label{thm:coverage-t2}
	Let $\bx$ and $\by$ be words of length $n\ge 4$ that are Type-B-confusable.
	If $\D_1(\bx)\cap\D_1(\by)=\{\bz\}$ and then we have that 
	\begin{equation}\label{eq:coverage-t2}
	\D_2(\bx)\cap\D_2(\by) = \D_1(\bz)\cup \T(\bx,\by),
	\end{equation}
	where $|\T(\bx,\by)|\le 2$.
\end{theorem}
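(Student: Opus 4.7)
The plan is to fix the notation $\bx = \bu a \bar a \bv b \bw$ and $\by = \bu \bar a \bv b \bar b \bw$ from Definition~\ref{def:type-B-confusable}, note from the proof of Lemma~\ref{lemma:lev1form} that $\bz = \bu \bar a \bv b \bw$, and define $\T(\bx,\by) \triangleq (\D_2(\bx) \cap \D_2(\by)) \setminus \D_1(\bz)$. The inclusion $\D_1(\bz) \subseteq \D_2(\bx)\cap\D_2(\by)$ is immediate because $\bz \in \D_1(\bx) \cap \D_1(\by)$, so deleting one further symbol from $\bz$ yields a word reachable from both $\bx$ and $\by$ by exactly two deletions. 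The content of the theorem is therefore the bound $|\T(\bx,\by)| \le 2$, which I will prove by induction on $|\bu| + |\bw|$.

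For the base case $\bu = \bw = \varnothing$, I would invoke Lemma~\ref{lemma:firsttwoforms} directly with $t = 2$, giving
\[
|\D_2(\bx) \cap \D_2(\by)| \le |\D_1(\bz)| + \nu_1(n-3) \le |\D_1(\bz)| + 2,
\]
where I use $\nu_1(m) = 2D_0(m-2) = 2$ from \eqref{eq:read-wholespace} (with a direct check for $n=4$, where $\bv$ has length~$1$ and the verification reduces to at most four explicit pairs). Subtracting $|\D_1(\bz)|$ yields $|\T(\bx,\by)| \le 2$ as required.

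For the inductive step, suppose $\bu = c\bu'$ is nonempty, and write $\bx = c\bx'$, $\by = c\by'$ with $(\bx', \by')$ Type-B-confusable of length $n-1$ and $\bz = c\bz'$; by Lemma~\ref{char:type-A} together with Lemma~\ref{lemma:lev1form}, $\D_1(\bx') \cap \D_1(\by') = \{\bz'\}$. I would then split $\D_2(\bx)\cap\D_2(\by)$ according to the leading symbol using Lemma~\ref{lemma:dtn}(iii):
\[
(\D_2(\bx) \cap \D_2(\by))^c = c \circ \bigl(\D_2(\bx') \cap \D_2(\by')\bigr),\qquad
(\D_2(\bx) \cap \D_2(\by))^{\bar c} = \bigl(\D_1(\bx') \cap \D_1(\by')\bigr)^{\bar c} \subseteq \{\bz'\}.
\]
Applying the same decomposition to $\D_1(\bz) = \D_1(c\bz')$ gives $\D_1(\bz)^c = c \circ \D_1(\bz')$ and $\D_1(\bz)^{\bar c} = \{\bz'\}^{\bar c}$, which exactly matches the $\bar c$-part of $\D_2(\bx)\cap\D_2(\by)$. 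Using the inductive hypothesis $\D_2(\bx') \cap \D_2(\by') = \D_1(\bz') \cup \T(\bx',\by')$ with $|\T(\bx',\by')|\le 2$, these pieces recombine to $\D_2(\bx)\cap\D_2(\by) = \D_1(\bz)\cup c\circ\T(\bx',\by')$, so $|\T(\bx,\by)| \le |\T(\bx',\by')| \le 2$. If instead $\bw$ is nonempty the identical argument applied to the trailing symbol (using the analogous subscript version of Lemma~\ref{lemma:dtn}(iii)) completes the induction.

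The main obstacle is the base case, since the inductive step is a clean prefix/suffix peeling that preserves $\T$ up to a leading symbol. That base case is precisely Lemma~\ref{lemma:firsttwoforms}, which already contains the delicate casework splitting on whether $b=a$ or $b=\bar a$ and on the endpoint symbols of $\bv$; the only subtlety left to check is that the $D_{t-2}(n-4)$ bound on $|\cS^a_{\bar a}|$ from that lemma specializes at $t=2$ to $D_0(n-4)=1$ (not $2$), so the cruder bound $\nu_1(n-3)=2$ used there is actually loose, which is why the final constant in $|\T(\bx,\by)|\le 2$ is tight and cannot be improved without further case analysis.
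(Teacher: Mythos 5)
Your proposal is correct and follows essentially the same route as the paper's proof: the base case ($\bu=\bw=\varnothing$) is handled by Lemma~\ref{lemma:firsttwoforms} at $t=2$ with $\nu_1(n-3)\le 2$, and the inductive step peels a prefix (or suffix) symbol using Lemma~\ref{lemma:dtn}(iii) to show the $\overline{c}$-part coincides with $\D_1(\bz)^{\overline{c}}$ while the $c$-part reduces to the shorter Type-B-confusable pair; your induction on $|\bu|+|\bw|$ is only a cosmetic reindexing of the paper's induction on $n$.
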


\begin{proof}
	Suppose $\bx=\bu a \overline{a} \bv b \bw$ and $\by=\bu \overline{a} \bv b\overline{b}\bw$, for some subwords $\bu, \bv$ and $\bw$, where $a,b \in \{0,1\}$. Let $\cS=D_2(\bx) \cap D_2(\by)$. Note that $\bz=\bu \overline{a} \bv b w$.
	
	We are going to show the result by induction on $n$. The base case is when $\bu$ and $\bw$ are empty subwords, which from the proof of Lemma \ref{lemma:firsttwoforms}, we can obtain that $D_2(\bx) \cap D_2(\by) =D_{1}(\bz) \cup \T(\bx,\by),$ where $|\T(\bx,\by)|\leq \nu_1(n-3)=2$, where the last equality comes from  Lemma \ref{lemma:dtn}. Suppose the statement is true for length $n \leq k-1$, we want to show for $n=k$. Now, we want to consider several cases for the prefix $\bu$. Suppose $\bu$ is a nonempty prefix, i.e. $\bu=c\bu^*$, for some subword $\bu^*$ and $c\in \{0,1\}$.
	
	Consider the following
	\begin{align}
	\cS^{\overline{c}} &=\D_2(c\bu^*a{\overline{a}}\bv b\bw)^{\overline{c}} \cap \D_2(c\bu^*{\overline{a}}\bv b{\overline{b}}\bw)^{\overline{c}} \nonumber\\
	&=\D_{1}(\bu^*a{\overline{a}}\bv b\bw)^{\overline{c}} \cap \D_{1}(\bu^*{\overline{a}}\bv b{\overline{b}}\bw)^{\overline{c}}\nonumber \\
	&=\D_0(\bu^*{\overline{a}}\bv b\bw)^{\overline{c}}=\D_1(c\bu^* \overline{a} \bv b w)^{\overline{c}}=\D_1(\bz)^{\overline{c}},\label{eq:part1}
	\end{align}
	where the third equality holds because $|\D_1(\bx)\cap\D_1(\by)|=1$, and
	\begin{align}
	\cS^c &=\D_2(c\bu^*a{\overline{a}}\bv b\bw)^c \cap \D_2(c\bu^*{\overline{a}}\bv b{\overline{b}}\bw)^c \nonumber\\
	&=c \circ \left(\D_2(\bu^*a{\overline{a}}\bv b\bw) \cap \D_2(\bu^*{\overline{a}}\bv b{\overline{b}}\bw)\right)\nonumber \\
	&= c \circ \left( \D_1(\bu^* \overline{a} \bv b \bw)\cup \T(\bu^*a{\overline{a}}\bv b\bw,\bu^*{\overline{a}}\bv b{\overline{b}}\bw)
	\right)\nonumber\\
	&=\D_1(\bz)^c \cup \T(\bx,\by)\label{eq:part2}
	\end{align}
	where the third equality follows from our induction hypothesis, and $\T(\bx,\by)=c \circ \T(\bu^*a{\overline{a}}\bv b\bw,\bu^*{\overline{a}}\bv b{\overline{b}}\bw)$, and hence $|\T(\bx,\by)| \leq 2$, from our hypothesis.
	Combining equations \eqref{eq:part1} and \eqref{eq:part2}, we have $D_2(\bx) \cap D_2(\by)=\cS=\cS^c \cup \cS^{\overline{c}} =\D_1(\bz) \cup \T(\bx,\by),$ where $|\T(\bx,\by)|\leq 2$. 
	Hence, induction on $n$ is complete.
\end{proof}

Next, we make the following observation on the word $\bz$ that lies in the intersection of the single-deletion balls.

\begin{lemma}\label{lem:z-not-alternating}
	If $\bx$ and $\by$ are Type-B-confusable and $\D_1(\bx)\cap \D_1(\by)=\{\bz\}$, 
	then $\bz$ is not alternating.
\end{lemma}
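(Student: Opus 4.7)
The plan is to argue by contradiction. Suppose that $\bz$ is alternating. From the Type-B-confusable form $\bx = \bu a \overline{a} \bv b \bw$ and $\by = \bu \overline{a} \bv b \overline{b} \bw$, recall (from the proof of Lemma~\ref{lemma:lev1form}) that the unique common single-deletion is $\bz = \bu \overline{a} \bv b \bw$. Since every substring of an alternating word is alternating, the block $\overline{a}\bv b$ sitting in the middle of $\bz$ is alternating.

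The key step is to observe that both middle portions, namely $\ba \triangleq a\overline{a}\bv b$ (the middle of $\bx$) and $\overline{a}\bv b\overline{b}$ (the middle of $\by$), are alternating words of the common length $|\bv|+3 \ge 3$. Indeed, prepending $a$ to the alternating word $\overline{a}\bv b$ preserves alternation because $a$ is the complement of its new right-neighbor $\overline{a}$; similarly, appending $\overline{b}$ on the right extends the alternation because $\overline{b}$ differs from the last character $b$. Since $\ba$ starts with $a$ and the middle of $\by$ starts with $\overline{a}$, these two alternating words of equal length are bitwise complements, so the middle of $\by$ equals $\bara$.

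Therefore $\bx = \bu \ba \bw$ and $\by = \bu \bara \bw$ with $\ba$ alternating of length at least $2$. By Definition~\ref{def:type-A-confusable}, the words $\bx$ and $\by$ are Type-A-confusable, and so Lemma~\ref{char:type-A} forces $|\D_1(\bx)\cap\D_1(\by)|=2$, contradicting the hypothesis that this intersection has size one. I do not foresee a significant obstacle: the only subtlety is the degenerate case $|\bv|=0$ combined with $b=\overline{a}$, where the block $\overline{a}\bv b$ collapses to $\overline{a}\overline{a}$ and is not alternating. However, in that sub-case $\bz$ itself contains the substring $\overline{a}\overline{a}$ and hence cannot have been alternating to begin with, so this configuration never arises. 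The proof therefore reduces to the single structural observation above combined with an appeal to Lemma~\ref{char:type-A}.
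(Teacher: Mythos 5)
Your proof is correct and follows essentially the same route as the paper's: assume $\bz$ alternating, deduce that the middle blocks $a\overline{a}\bv b$ of $\bx$ and $\overline{a}\bv b\overline{b}$ of $\by$ are complementary alternating words, conclude that $\bx$ and $\by$ are Type-A-confusable, and invoke Lemma~\ref{char:type-A} to contradict $|\D_1(\bx)\cap\D_1(\by)|=1$. The extra details you supply (the complementarity of equal-length alternating words with complementary first bits, and the degenerate $|\bv|=0$ sub-case) are correct refinements of the same argument.
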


\begin{proof}
	 Suppose $\bx=\bu a \overline{a} \bv b \bw$ and $\by=\bu \overline{a} \bv b\overline{b}\bw$, for some subwords $\bu, \bv$ and $\bw$, where $a,b \in \{0,1\}$. Suppose that $\bz=\bu \overline{a} \bv b w$ is an alternating sequence. This means $\overline{a} \bv b$ is an alternating subword, and hence $a\overline{a} \bv b$ (and $\overline{a} \bv b \overline{b}$), which is a subwords of $\bx$ (and $\by$, respectively) is an alternating sequence as well. This implies that $\bx=\bu \ba \bw$ and $\by=\bu \overline{\ba} \bw$ are Type-A-confusable, where $\ba=a\overline{a} \bv b$, and hence $|\D_2(\bx) \cap \D_2(\by)| =2$ by Lemma \ref{char:type-A}, which contradicts our assumption.
\end{proof}

Finally, we prove the main result of this section.

\begin{proof}[Proof of Theorem~\ref{thm:2-reconstruction}]
	From Lemma \ref{lemma:lev1form}, we know that if $\D_1(\bx)\cap\D_1(\by)=\{\bz\}$, then there are two possibilities. First possibility is when $\bx$ and $\by$ have Hamming distance one, which by Lemma \ref{lemma:hamming1forgeneral}, implies that $|\D_t(\bx)\cap\D_t(\by)|= |\D_{t-1}(\bz)| < |\D_{t-1}(\bz)|+\nu_{t-1}(n-3)\leq D_{t-1}(n-1)+\nu_{t-1}(n-3)$, since $\nu_{t-1}(n-3)>0$ for $t \geq 2$ and $n\geq 6$.
	
	Second possibility is when $\bx$ and $\by$ are Type-B-confusable. Note that for $t=2$, from Theorem~\ref{thm:coverage-t2}, we know that $|\D_2(\bx) \cap \D_2(\by)| \leq |\D_1(\bz)|+2 < D_1(n-1)+\nu_1(n-3)$, where the strict inequality comes from 
	Lemmas~\ref{lem:z-not-alternating} and~\ref{lemma:dtn}(v).
	Now, we only need to show for $t \geq 3$.
	
	Without loss of generality, let $\bx=\bu a {\overline{a}} \bv b \bw \quad \text{and} \quad \by=\bu {\overline{a}} \bv b{\overline{b}}\bw$. Let $\cS=\D_t(\bx) \cap \D_t(\by)$. Note that $\bz=\bu {\overline{a}} \bv b \bw$.
	
	We are going to show the result by induction on $n$. The base case is when $\bu$ and $\bw$ are empty. In this case, from Lemma \ref{lemma:firsttwoforms}, we have $|\cS|  \leq |D_{t-1}(\bz)| +\nu_{t-1}(n-3)$, 
	and further from Lemmas~\ref{lem:z-not-alternating} and~\ref{lemma:dtn}(v), we have the desired result. 
	Suppose the statement is true for length $n \leq k-1$, we want to show for $n=k$. Now, we want to consider several cases for the prefix $\bu$. Suppose $\bu$ is a nonempty subword, i.e. $\bu=c\bu^*$, for some subword $\bu^*$ and $c\in \{0,1\}$.
	
	Consider the following,
	\begin{align}
	|\cS^{\overline{c}}| &= |\D_t(c\bu^*a{\overline{a}}\bv b\bw)^{\overline{c}}\cap \D_t(c\bu^*{\overline{a}}\bv b{\overline{b}}\bw)^{\overline{c}}| \nonumber  \\
	&=|\D_{t-1}(\bu^*a{\overline{a}}\bv b\bw)^{\overline{c}} \cap \D_{t-1}(\bu^*{\overline{a}}\bv b{\overline{b}}\bw)^{\overline{c}}|, \label{eq:firstsubset}\\
	|\cS^c| &= |\D_t(c\bu^*a{\overline{a}}\bv b\bw)^c \cap \D_t(c\bu^*{\overline{a}}\bv b{\overline{b}}\bw)^c| \nonumber \\
	&=|\D_t(\bu^*a{\overline{a}}\bv b\bw) \cap \D_t(\bu^*{\overline{a}}\bv b{\overline{b}}\bw)|\nonumber \\
	&\leq D_{t-1}(n-2)+\nu_{t-1}(n-4) 
	\label{eq:secondsubset},
	\end{align} where the last inequality holds from our induction hypothesis.
	Now, consider the following cases
	
	{\bf Case 1}: If ${\overline{c}}$ does not appear in $\bu^*$ and ${\overline{c}}={\overline{a}}$.\\From \eqref{eq:firstsubset}, we have $|\cS^{\overline{c}}|\leq |\D_{t-1}(\bu^*a{\overline{a}}\bv b \bw)^{\overline{c}}|\leq |\D_{t-2-|\bu^*|}(\bv b\bw)| \leq D_{t-2-|\bu^*|}(n-|\bu^*|-3)\leq D_{t-2}(n-3)$, where the last inequality follows from Lemma \ref{lemma:dtn}. For $t \geq 3,n \geq 6$, we have $\nu_{t-2}(n-5)>0$. Thus $|\cS^{\overline{c}}| < D_{t-2}(n-3)+\nu_{t-2}(n-5)$, and combined with \eqref{eq:secondsubset}, we have $|\cS|=|\cS^c|+|\cS^{\overline{c}}| < D_{t-2}(n-3)+\nu_{t-2}(n-5) +D_{t-1}(n-2)+\nu_{t-1}(n-4)=D_{t-1}(n-1)+\nu_{t-1}(n-3)$, where the last equality follows from Lemma \ref{lemma:dtn}
	
	{\bf Case 2}: If ${\overline{c}}$ does not appear in $\bu^*$ and ${\overline{c}}=a$.\\From \eqref{eq:firstsubset}, we have $|\cS^{\overline{c}}| \leq |\D_{t-1}(\bu^*{\overline{a}}\bv b{\overline{b}} \bw)^{\overline{c}}|\leq |\D_{t-2-|\bu^*|}(\bv b{\overline{b}}\bw)^{a} |\leq D_{t-2-|\bu^*|}(|\bv b{\overline{b}}\bw|-1)\leq D_{t-2-|\bu^*|}(n-|\bu^*|-3)\leq D_{t-2}(n-3)$, 
	where the third and last inequalities come from Lemma~\ref{lemma:dtn}.
	Thus similar to Case 1, we have $|\cS^{\overline{c}}| < D_{t-2}(n-3)+\nu_{t-2}(n-5)$, and therefore $|\cS|<D_{t-1}(n-1)+\nu_{t-1}(n-3)$.
	
	{\bf Case 3}: : If $t < \frac{n}{2}$ and ${\overline{c}}$ appears in $\bu^*$ i.e. $\bu^*=c^m \overline{c} \bu'$, for some binary sequence $\bu'$ and $m\geq 0$.\\Note that $t-m-1 < \frac{n-m-2}{2}$, therefore from \eqref{eq:firstsubset}, we have $|\cS^{\overline{c}}|=|\D_{t-m-1}(\bu'a{\overline{a}}\bv b \bw) \cap \D_{t-m-1}(\bu'{\overline{a}}\bv b{\overline{b}} \bw)| < D_{t-m-2}(n-m-3) +\nu_{t-m-2}(n-m-5) \leq D_{t-2}(n-3)+\nu_{t-2}(n-5)$,
	where the first inequality holds because $\bu'a{\overline{a}}\bv b\bw$ and $\bu'{\overline{a}}\bv b{\overline{b}}\bw$ are \textit{Type-B confusable} and hence we can use our induction hypothesis, and the last inequality follows from Lemma \ref{lemma:dtn}. Combined with \eqref{eq:secondsubset}, we have $|\cS|=|\cS^c|+|\cS^{\overline{c}}| < D_{t-2}(n-3)+\nu_{t-2}(n-5) +D_{t-1}(n-2)+\nu_{t-1}(n-4)=D_{t-1}(n-1)+\nu_{t-1}(n-3)$, where the last equality follows from Lemma \ref{lemma:dtn}.
	
	{\bf Case 4}: : If $t \geq \frac{n}{2}$ and ${\overline{c}}$ appears in $\bu^*$ i.e. $\bu^*=c^m \overline{c} \bu'$, for some binary sequence $\bu'$ and $m\geq 0$. Similar to Case 3, using induction hypothesis , we also have $|\cS^{\overline{c}}|=|\D_{t-m-1}(\bu'a{\overline{a}}\bv b \bw) \cap \D_{t-m-1}(\bu'{\overline{a}}\bv b{\overline{b}} \bw)| \leq D_{t-m-2}(n-m-3) +\nu_{t-m-2}(n-m-5) \leq D_{t-2}(n-3)+\nu_{t-2}(n-5)$, and therefore $|\cS| \leq D_{t-1}(n-1)+\nu_{t-1}(n-3)$.
	
	In all cases, we have shown that the statement is true. Now suppose $\bw$ is a nonempty binary sequence, i.e. $\bw=\bw^*c$, for some binary sequence $\bw^*$ and $c\in \{0,1\}$, then similarly to the above, by considering $|\cS_c|$ and $|\cS_{\overline{c}}|$, we can also show that the statement is true.
	
	Therefore we are left with the case when $\bu$ and $\bw$ are both empty strings, which is already covered in the base case.
\end{proof}

\subsection{Improvements when $t=2$}
\label{sec:t2}

To conclude this section, we focus on the case $t=2$ and show that by controlling the parameter $P$ in Definition~\ref{def:csvt}, we are able to bound the number of noisy reads required to reconstruct a codeword. To do so, we make the following simple observation.

\begin{lemma}\label{lem:runs-csvt}
	Let $\bz$ be a word of length $n$.
	If the length of any alternating run in a word $\bz$ is at most $P$, then the number of runs in $\bz$ is at most $n-\ceil*{n/P}+1$.
	Therefore, $|\D_t(\bz)|\le n-\ceil*{n/P}+1$.
\end{lemma}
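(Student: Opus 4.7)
My plan is to prove the two assertions of the lemma in sequence. First I would establish the combinatorial runs bound $r(\bz) \le n - \lceil n/P \rceil + 1$, where $r(\bz)$ denotes the number of runs of $\bz$, by a short hitting-set argument on the adjacent pairs of $\bz$. Second I would derive the stated deletion-ball bound from the classical identity $|\D_1(\bz)| = r(\bz)$. Since $|\D_t(\bz)|$ is $\Theta(n^t)$ in the worst case for $t \ge 2$ (e.g.\ for strictly alternating $\bz$), the ``therefore'' clause cannot hold for a free $t \ge 2$; I read it as the single-deletion statement $|\D_1(\bz)| \le n - \lceil n/P \rceil + 1$, which is precisely the quantity that Theorem~\ref{thm:coverage-t2} reduces $|\D_2(\bx)\cap\D_2(\by)|$ to and is therefore exactly what is needed later in the subsection.

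For the runs bound, I examine the $n - 1$ adjacent pairs $(z_i, z_{i+1})$ of $\bz$. Among them, exactly $r(\bz) - 1$ are ``transitions'' (with $z_i \ne z_{i+1}$) and $n - r(\bz)$ are ``equalities'' (with $z_i = z_{i+1}$). The hypothesis that no 2-periodic run of $\bz$ exceeds length $P$ forbids any strictly alternating substring of length $P + 1$, because such a substring is itself a 2-periodic run of length $P + 1 > P$. Hence in every window of $P$ consecutive adjacent pairs of $\bz$ there must lie at least one equality. Partitioning the $n - 1$ adjacent pairs greedily into $\lceil (n-1)/P \rceil$ contiguous windows of size at most $P$, each window contributes at least one equality, so $n - r(\bz) \ge \lceil (n-1)/P \rceil$. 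Since $\lceil (n-1)/P \rceil \ge \lceil n/P \rceil - 1$, rearranging gives $r(\bz) \le n - \lceil n/P \rceil + 1$.

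For the deletion-ball step, I invoke the standard identity $|\D_1(\bz)| = r(\bz)$: single deletions within a common run of $\bz$ yield the same subsequence, while deletions from distinct runs yield distinct subsequences. Composing with the runs bound gives the required $|\D_1(\bz)| \le n - \lceil n/P \rceil + 1$, which in turn feeds into Theorem~\ref{thm:coverage-t2} to yield $|\D_2(\bx)\cap\D_2(\by)| \le n - \lceil n/P \rceil + 3$. The only substantive step is the hitting-set/pigeonhole count in the first paragraph, which hinges on parsing the 2-periodic run definition from Definition~\ref{def:csvt} correctly; the deletion-ball identity is one line and the rest is algebra. I do not anticipate any conceptual obstacle beyond this parsing, and the question of whether the literal conclusion should read $\D_t$ or $\D_1$ is an interpretive rather than a technical issue.
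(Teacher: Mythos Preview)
Your approach is essentially the paper's: both count the ``equality'' positions $i$ with $z_i = z_{i+1}$ and use the hypothesis to show there are at least $\lceil n/P \rceil - 1$ of them. The paper phrases this as a contradiction argument on the position of the last equality; your pigeonhole/window argument is the natural direct version of the same count.

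There is, however, a small gap in your windowing step. You partition the $n-1$ adjacent pairs into $\lceil (n-1)/P \rceil$ windows of size \emph{at most} $P$ and assert that each contains an equality. But the hypothesis only forbids $P$ consecutive transitions; a window of size strictly less than $P$ may consist entirely of transitions. For example, with $n=5$, $P=3$, and $\bz = 01001$, the longest alternating run has length $3 = P$, yet there is only one equality (at position~$3$), whereas $\lceil (n-1)/P \rceil = \lceil 4/3 \rceil = 2$. So your intermediate inequality $n - r(\bz) \ge \lceil (n-1)/P \rceil$ is false in general. The fix is immediate: keep only the $\lfloor (n-1)/P \rfloor$ \emph{full} windows of size exactly $P$; each of those must contain an equality, and since $\lfloor (n-1)/P \rfloor = \lceil n/P \rceil - 1$ for all $n \ge 1$, you get $n - r(\bz) \ge \lceil n/P \rceil - 1$ directly, without the subsequent weakening step. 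Your reading of the ``therefore'' clause as the single-deletion bound $|\D_1(\bz)| = r(\bz) \le n - \lceil n/P \rceil + 1$ is correct and matches exactly how the lemma is used downstream.
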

\begin{proof}
	Let $S=\{i \in \mathbb{Z} : \bz_i=\bz_{i+1}\}$. We order the elements of $S$ and call them $s_1, s_2, ..., s_{|S|}$ from smallest to biggest. We want to show that $|S| \geq \ceil*{n/P}-1$. Note that $s_{i+1}-s_i \leq P$ for all $i\geq 1$, $s_1\leq P$ and $s_{|S|}\geq n-P$, since otherwise there would be an alternating run of length more than $P$.
	
	Suppose on the contrary that $|S| <\ceil*{n/P}-1$, this implies that $s_{|S|}=s_1+\sum_{i=1}^{|S|-1}{s_{i+1}-s_i} \leq |S|P\leq \left(\ceil*{n/P}-2\right)P <(n/P +1 -2)P=n-P $, which contradicts that $s_{|S|}\geq n-P$. Therefore $|S| \geq \ceil*{n/P}-1$, and hence the number of runs in $\bz$ is at most $n-\ceil*{n/P}+1$.
\end{proof}

Recall that by design, the length of any alternating run of any codeword $\bx$ in a constrained SVT code is at most $P$. 
Hence, the same property holds for any word $\bz$ in the single-deletion ball of $\bx$.
So, we can apply Lemma~\ref{lem:runs-csvt} and provide a tighter bound on the size of $\D_1(\bz)$

\begin{proposition}
	For any $c\in\bbZ_{1+P/2}$ and $d\in\bbZ_2$, the constrained SVT code $\CSVT(n,P;c,d)$ is an $(n,N_P; \D_2)$-reconstruction code 
	where $N_P=\max\{n-\ceil*{(n-1)/P}+3, 7\}$.
\end{proposition}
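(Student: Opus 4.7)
The plan is to establish, for any two distinct codewords $\bx,\by\in\CSVT(n,P;c,d)$, the bound $|\D_2(\bx)\cap\D_2(\by)|\le N_P - 1$. Since Theorem~\ref{thm:code-t1-N2} guarantees that $\CSVT(n,P;c,d)$ is already an $(n,2;\D_1)$-reconstruction code, the single-deletion balls always satisfy $|\D_1(\bx)\cap\D_1(\by)|\in\{0,1\}$, which cleanly splits the argument into two cases.

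In the case $|\D_1(\bx)\cap\D_1(\by)|=0$, I invoke Theorem~\ref{thm:1-reconstruction} at $t=2$: a direct evaluation of \eqref{eq:N1} gives $N^{(1)}_2(n) = 2D_0(n-4)+2D_0(n-5)+2D_0(n-7)+D_{-1}(n-6)+D_{-1}(n-7) = 6 \le N_P - 1$, since $N_P \ge 7$ by definition. In the case $\D_1(\bx)\cap\D_1(\by)=\{\bz\}$, Lemma~\ref{lemma:lev1form} dichotomizes $(\bx,\by)$ into having Hamming distance one or being Type-B-confusable. In the former case, Lemma~\ref{lemma:hamming1forgeneral} at $t=2$ gives $\D_2(\bx)\cap\D_2(\by)=\D_1(\bz)$; in the latter, Theorem~\ref{thm:coverage-t2} gives $\D_2(\bx)\cap\D_2(\by)=\D_1(\bz)\cup\T(\bx,\by)$ with $|\T(\bx,\by)|\le 2$. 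Either sub-case therefore yields $|\D_2(\bx)\cap\D_2(\by)|\le |\D_1(\bz)|+2$.

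It then remains to bound $|\D_1(\bz)|$, which equals the number of runs in $\bz$. For this I will apply Lemma~\ref{lem:runs-csvt} to $\bz$, which has length $n-1$. The key input is that every alternating run of $\bz$ has length at most $P$, which is inherited from the fact that $\bx$ satisfies condition (iii) of Definition~\ref{def:csvt}. Granting this inheritance, Lemma~\ref{lem:runs-csvt} yields $|\D_1(\bz)|\le (n-1)-\ceil*{(n-1)/P}+1 = n-\ceil*{(n-1)/P}$, and hence $|\D_2(\bx)\cap\D_2(\by)|\le n-\ceil*{(n-1)/P}+2 \le N_P - 1$. The main subtle step of the argument is precisely this inheritance: one must carefully track how a single deletion can splice two alternating portions of $\bx$ through an equal adjacent pair, and check that the resulting alternating run in $\bz$ still respects the bound $P$. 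Once this is secured, the two cases above combine to give $\nu(\CSVT(n,P;c,d);\D_2)<N_P$, which is precisely the proposition's claim.
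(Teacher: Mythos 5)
Your proposal follows the paper's own proof route almost exactly: the same split into $|\D_1(\bx)\cap\D_1(\by)|=0$ (handled by Theorem~\ref{thm:1-reconstruction} at $t=2$, giving $6<N_P$) versus $|\D_1(\bx)\cap\D_1(\by)|=1$ (handled by $|\D_2(\bx)\cap\D_2(\by)|\le|\D_1(\bz)|+2$ together with Lemma~\ref{lem:runs-csvt}). You are in fact slightly more careful than the paper in the singleton case, since you treat the Hamming-distance-one subcase via Lemma~\ref{lemma:hamming1forgeneral} rather than applying \eqref{eq:coverage-t2}, which Theorem~\ref{thm:coverage-t2} only states for Type-B-confusable pairs.

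However, the one step you defer --- that every alternating run of $\bz$ has length at most $P$, ``inherited'' from condition (iii) on $\bx$ --- is not merely subtle; it is false in general, so it cannot be secured in the form you need. A deletion at a position $p$ with $x_{p-1}=x_p$ or $x_p=x_{p+1}$ splices two maximal alternating runs of $\bx$, each of length up to $P$, into a single alternating run of $\bz$ of length up to $2P-1$. Concretely, $\bx=010110101$ has longest $2$-periodic run of length $5$, so it satisfies condition (iii) with $P=6$, yet deleting its fifth symbol yields $\bz=01010101$, whose alternating run has length $8>P$. What does survive a deletion is the run count itself: writing $r(\cdot)$ for the number of runs, a deletion never increases it, so $|\D_1(\bz)|=r(\bz)\le r(\bx)\le n-\ceil*{n/P}+1$ by applying Lemma~\ref{lem:runs-csvt} to $\bx$ directly. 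This yields $|\D_2(\bx)\cap\D_2(\by)|\le n-\ceil*{n/P}+3$, which matches the stated $N_P=n-\ceil*{(n-1)/P}+3$ only up to an additive constant and gives $\le N_P$ rather than the required strict inequality except when $n\equiv 1\pmod P$. To be fair, the paper's own proof asserts the same false inheritance (``the same property holds for any word $\bz$ in the single-deletion ball of $\bx$''), so the gap is shared with the paper; your write-up has the virtue of flagging exactly where the difficulty lies, but as proposed the argument does not close.
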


\begin{proof}
	Let $\bx$ and $\by$ be distinct codewords 
	Then $|\D_1(\bx)\cap \D_1(\by)|\le 1$ and it remains to show that $|\D_t(\bx)\cap \D_t(\by)|< N_P$ .
	
	When the intersection is empty, Theorem~\ref{thm:1-reconstruction} states that  $|\D_t(\bx)\cap \D_t(\by)|\le 6<N_P$.
	
	When $|\D_1(\bx)\cap \D_1(\by)|=1$, let $\bz$ be the word.
	Then since $\bz$ is a subword of $\bx$, 
	the alternating run of $\bz$ is of length at most $P$ and the number of runs of $\bz$ is at most $n-1-\ceil*{(n-1)/P}$.
	Applying \eqref{eq:coverage-t2}, we have that $|\D_t(\bx)\cap \D_t(\by)|<N_P$, as required.
\end{proof}

Let $P\ge 4$. It is well-known (see for example, \cite{chee2018coding}) that the number of length-$n$ words whose 2-periodic run is at most $P$ is $4F_{P-1}(n-2)$, where
\[
F_\ell(n) = 
\begin{cases}
2^n, & \text{if } 0\le n\le \ell-1,\\
\sum_{i=1}^\ell F_\ell(n-i), &\text{otherwise.}
\end{cases}
\]

Hence, we have the following lower bound on the size of a reconstruction code.

\begin{corollary}\label{cor:csvt}
	For $P\ge 4$, set $N_P=\max\{n-\ceil*{(n-1)/P}+3, 7\}$. Then there exists an $(n,N_P; \D_2)$-reconstruction code of size at least $4 F_{P-1}(n-2)/(P+2)$.
\end{corollary}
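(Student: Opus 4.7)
The plan is to combine the preceding Proposition with an averaging argument over the two linear constraints in Definition~\ref{def:csvt}. By that Proposition, every choice of $(c,d)\in \bbZ_{1+P/2}\times \bbZ_2$ yields a code $\CSVT(n,P;c,d)$ which is automatically an $(n,N_P;\D_2)$-reconstruction code. Hence to establish the corollary it suffices to exhibit a pair $(c,d)$ for which the codebook has at least $4F_{P-1}(n-2)/(P+2)$ codewords.

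First, I would count the number of length-$n$ binary words satisfying Condition~(iii) of Definition~\ref{def:csvt} in isolation, i.e.\ those whose longest $2$-periodic run has length at most $P$. As noted in the paragraph preceding the corollary, this count equals $4F_{P-1}(n-2)$, using the standard recurrence for $F_\ell$.

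Second, I would partition this set of $4F_{P-1}(n-2)$ words into equivalence classes according to the residues $(\syn(\bx) \bmod (1+P/2),\ \sum_i x_i \bmod 2)$. The total number of classes is exactly $2\cdot (1+P/2) = P+2$, and by construction the class indexed by $(c,d)$ is precisely the code $\CSVT(n,P;c,d)$, since Conditions (i) and (ii) simply fix these two residues. An averaging (pigeonhole) argument then guarantees that at least one class contains at least $4F_{P-1}(n-2)/(P+2)$ words; picking the corresponding $(c,d)$ yields the required code.

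There is no real obstacle here: the nontrivial content has already been established in the preceding Proposition (which in turn leans on Theorem~\ref{thm:2-reconstruction} and Lemma~\ref{lem:runs-csvt}), and the remaining step is purely enumerative. The only small point to be careful about is that the partition into $P+2$ classes is genuinely a partition of the set of words satisfying Condition~(iii), so that the averaging lower bound is valid and the selected class is exactly a CSVT code to which the Proposition applies.
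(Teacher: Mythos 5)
Your proposal is correct and matches the paper's (implicit) argument exactly: the Proposition guarantees every $\CSVT(n,P;c,d)$ is an $(n,N_P;\D_2)$-reconstruction code, and pigeonholing the $4F_{P-1}(n-2)$ words satisfying the run-length constraint over the $2\cdot(1+P/2)=P+2$ residue classes $(c,d)$ yields the stated size bound. No issues.
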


To end this section, for codelengths $n\in\{127,255,1023\}$, we vary the parameter $P$ in the constrained SVT codes and compute the corresponding values of $N_P$ and redundancy. The numerical results are given in Table~\ref{table:csvt}. 
As expected, as we decrease the value of $P$, the number of required reads also decreases. 
However, the number of redundant bits also increases significantly and 
in this case (where $P$ is small), the VT code uses significantly less redundant bits. 
For completeness, we list the values of read-coverage and redundancy of a VT-code of length $n$ and the space $\{0,1\}^n$ (corresponding to the uncoded case).

\begin{table}[!h]
\begin{center}
	\renewcommand{\arraystretch}{1.1}
	
	\begin{tabular}{|c|r|R{1.5cm}|r|l|}
		\hline
		$n$ & $P$ & $N_P$ / Read Coverage & Redundancy & Remarks \\
		\hline\hline
		127 & -- & 7 & 7.00 & VT code \\
		127 & 6 & 109 & 6.016 & -- \\
		127 & 8 & 114 & 4.018 & -- \\
		127 & 10 & 117 & 3.753 & -- \\
		127 & -- & 250 & 0.00 & $\{0,1\}^n$\\
		\hline
		\hline
		255 & -- & 7 & 8.00 & VT code \\
		255 & 8 & 226 & 4.762 & -- \\
		255 & 10 & 232 & 3.935 & -- \\
		255 & 12 & 236 & 3.894 & -- \\
		255 & -- & 506 & 0.00 & $\{0,1\}^n$\\
		\hline
		\hline
		1023 & -- & 7 & 10.00 & VT code \\
		1023 & 8 & 898 & 9.22 & -- \\
		1023 & 10 & 923 & 5.03 & -- \\
		1023 & 12 & 940 & 4.17 & -- \\
		1023 & 14 & 953 & 4.09 & -- \\
		1023 & -- & 2042 & 0.00 & $\{0,1\}^n$\\
		\hline		
	\end{tabular}
\end{center}
\caption{List of constrained SVT codes and their read coverage and redundancy}\label{table:csvt}
\end{table}



\newpage

\end{document}